\documentclass[sigconf,10pt]{acmart} 
\usepackage{algorithm}
\usepackage{appendix}
\usepackage{clrscode}
\usepackage{amsmath}
\usepackage{amssymb}
\usepackage{appendix}
\usepackage{caption}
\usepackage{balance}
\usepackage{color}
\usepackage{comment}
\usepackage{graphics}
\usepackage{graphicx}
\usepackage{listings}
\usepackage{mathtools}
\usepackage{multirow}
\usepackage{shortvrb}
\usepackage{subcaption}
\usepackage{tabularx}
\usepackage{url}
\usepackage{mdwlist}
\usepackage{tikz}
\usepackage{chngcntr}
\usepackage{booktabs}


\newcommand{\system}{{\sc FITing-Tree}}

\DeclareMathOperator*{\argmin}{\arg \min}




\begin{document}
\title{FITing-Tree: A Data-aware Index Structure}


\author{Alex Galakatos$^{1*}$\thanks{*Authors contributed equally}  \ \ \  Michael Markovitch$^{1*}$ \ \ \ Carsten Binnig$^2$ \\
       Rodrigo Fonseca$^1$  \ \ \ Tim Kraska$^3$
       \\
       \small
       $^1$Brown University
       \{first\_last\}@brown.edu \ \ 
       $^2$TU Darmstadt \{first.last@cs.tu-darmstadt.de\} \ \ 
       $^3$MIT CSAIL \{last@mit.edu\}
       }






\renewcommand{\shortauthors}{A. Galakatos et al.}
\renewcommand{\authors}{Alex Galakatos, Michael Markovitch, Carsten Binnig, Rodrigo Fonseca, Tim Kraska}

\copyrightyear{2019} 
\acmYear{2019} 
\setcopyright{acmcopyright}
\acmConference[SIGMOD '19]{2019 International Conference on Management of Data}{June 30-July 5, 2019}{Amsterdam, Netherlands}
\acmBooktitle{2019 International Conference on Management of Data (SIGMOD '19), June 30-July 5, 2019, Amsterdam, Netherlands}
\acmPrice{15.00}
\acmDOI{10.1145/3299869.3319860}
\acmISBN{978-1-4503-5643-5/19/06}

\fancyhead{}

\begin{abstract}
Index structures are one of the most important tools that DBAs leverage to improve the performance of analytics and transactional workloads.
However, building several indexes over large datasets can often become prohibitive and consume valuable system resources.
In fact, a recent study showed that indexes created as part of the TPC-C benchmark can account for 55\% of the total memory available in a modern DBMS.
This overhead consumes valuable and expensive main memory, and limits the amount of space available to store new data or process existing data.

In this paper, we present \system{}, a novel form of a learned index which uses piece-wise linear functions with a bounded error specified at construction time.
This error knob provides a tunable parameter that allows a DBA to FIT an index to a dataset and workload by being able to balance lookup performance and space consumption.
To navigate this tradeoff, we provide a cost model that helps determine an appropriate error parameter given either (1) a lookup latency requirement (e.g., $500ns$) or (2) a storage budget (e.g., $100MB$).
Using a variety of real-world datasets, we show that our index is able to provide performance that is comparable to full index structures while reducing the storage footprint by orders of magnitude.
\end{abstract}

\maketitle

\section{Introduction}


Tree-based index structures (e.g., B+ trees) are one of the most important tools that DBAs leverage to improve the performance of analytics and transactional workloads.
However, for main-memory databases, tree-based indexes can often consume a significant amount of memory.
In fact, a recent study~\cite{reducing_storage} shows that the indexes created for typical OLTP workloads can consume up to 55\% of the total memory available in a state-of-the-art in-memory DBMS.
This overhead not only limits the amount of space available to store new data but also reduces space for intermediates that can be helpful when processing existing data.

To reduce the storage overhead of B+ trees, various compression schemes have been developed~\cite{bftree,graefe,prefixtrees,superscalar}.
The main idea behind these techniques is to remove the redundancy that exists among keys and/or to reduce the size of each key inside a node of the index.
For example, prefix and suffix truncation can be used to store common parts of keys only once per index node, reducing the total size of the tree.
Additionally, more expensive compression techniques like Huffmann coding can be applied within each node but come at a higher runtime cost since pages must be decompressed to search for an item. 

Although each of the previously mentioned compression schemes reduce the size of an index node, the memory footprint of these indexes still grows linearly with the number of distinct keys to be indexed, resulting in indexes that can consume a significant amount of memory.
This observation is especially true for data such as timestamps or sensor readings that are generated in a wide variety of applications (e.g., autonomous vehicles, IoT devices). 
Even worse, the number of unique keys to be indexed for such data types typically grow over time, resulting in indexes that are constantly growing.
Consequently, a DBA has no way to restrict memory consumption other than dropping an index completely.

To tackle this issue, we present \system{}, a novel index structure that compactly captures trends in data using piece-wise linear functions.
Unlike typical indexes which use fixed-size pages on the leaf level that point to the data, \system{} uses piece-wise linear functions to quickly \emph{approximate} the position of an element.
By leveraging the trends within the data, \system{} can reduce the memory consumption of an index by orders of magnitude compared to a traditional B+ tree.
At the core of our index structure is a parameter that specifies the amount of acceptable error (i.e., a constant that is the maximum distance between the predicted and actual position of any key).
Unlike existing index structures, our error parameter allows a DBA to FIT an index to a given scenario and balance the lookup performance and space consumption of an index.
To navigate this tradeoff, we also present a cost model that helps a DBA choose an appropriate error term given either (1) a lookup latency requirement (e.g., $500$ns) or (2) a storage budget (e.g., $100$MB).

In the basic version of \system{}, we assume that the table data to be indexed is sorted by the index key (i.e., clustered index) but we also show how our techniques extend to secondary (i.e., non-clustered) indexes.
Using a variety of real-world datasets, we show that our index structure provides performance that is comparable to full and fixed-page indexes (even for a worst-case dataset) while reducing the storage footprint by orders of magnitude.

Using linear functions to approximate the distribution makes \system{} a form of a learned index~\cite{learnedindexes}. 
However, in contrast to the initially proposed techniques, our approach allow us to (1) bound for the worst-case lookup performance, (2) efficiently support insert operations, and (3) enable paging (i.e., the entire data does not have to reside in a contiguous memory region).
Furthermore, although the problem of approximating distributions using piece-wise functions is also not new~\cite{ESCH196985,Braess1971,smartgrid,shatkay1996approximate,keogh2001online,FU2011164,simsearch,Xu:2012:AAO:2247596.2247620,Liu:2008:NOM:1477069.1477485,neumann}, none of these techniques have been applied to indexing and therefore do not consider operations that indexes must support.

Another interesting observation is that our compression scheme in \system{} is orthogonal to node-level compression techniques such as the previously mentioned prefix/suffix truncation.
In other words, since \system{} internally uses a tree structure for inner nodes, we can still apply these techniques to further reduce an index's size.


In summary, we make the following contributions:
\begin{itemize*}
\item We propose \system{}, a novel index structure that leverages properties about the underlying data distribution to reduce the size of an index.
\item We present and analyze an efficient segmentation algorithm that incorporates a tunable error parameter that allows DBAs to balance the lookup performance and space footprint of our index.
\item We propose a cost model that helps a DBA determine an appropriate error threshold given either a latency or storage requirement.
\item Using several real-world datasets, we show that our index provides similar (or in some cases even better) performance compared to existing index structures while consuming orders of magnitude less space.
\end{itemize*}

The remainder of the paper is organized as follows.
In Section \ref{sec:overview}, we first present an overview of our new index structure called \system{}.
Afterwards, we discuss the main index operations: bulk loading (Section \ref{sec:bulk_loading}), lookups (Section \ref{sec:lookups}) and inserts (Section \ref{sec:inserts}).
Section \ref{sec:cost_model} then introduces our cost model that allows a DBA to balance the lookup performance and the space consumption of a \system{}.
Finally, in Section \ref{sec:experiments} we discuss the results of our evaluation on real and synthetic datasets, summarize related work in Section \ref{sec:related}, and finally conclude in Section \ref{sec:conclusion}.

\section{Overview}
\label{sec:overview}

At a high level, indexes (and B+ trees over sorted attributes in particular) can be represented by a function that maps a key (e.g., a timestamp) to a storage location.
Using this representation, \system{} partitions the key space into a series of disjoint linear segments that approximate the true function, since it is (generally) not possible to fully model the underlying data distribution.
At the core of this process is a tunable error threshold which represents the maximum distance that the predicted location of any key inside a segment is from its actual location.
Instead of storing all values in the key space, \system{} stores only (1) the starting key of each linear segment and (2) the slope of the linear function in order to compute a key's approximate position using linear interpolation.

In the following, we first discuss how we can use functions to map key values to storage locations.
Then, we discuss how we leverage this function representation to efficiently implement our index structure on top of a B+ tree for clustered indexes.
Finally, we show how our ideas can also be applied to compress secondary indexes.


\begin{figure}
\begin{center}
\vspace{-2mm}
\includegraphics[width=.9\columnwidth]{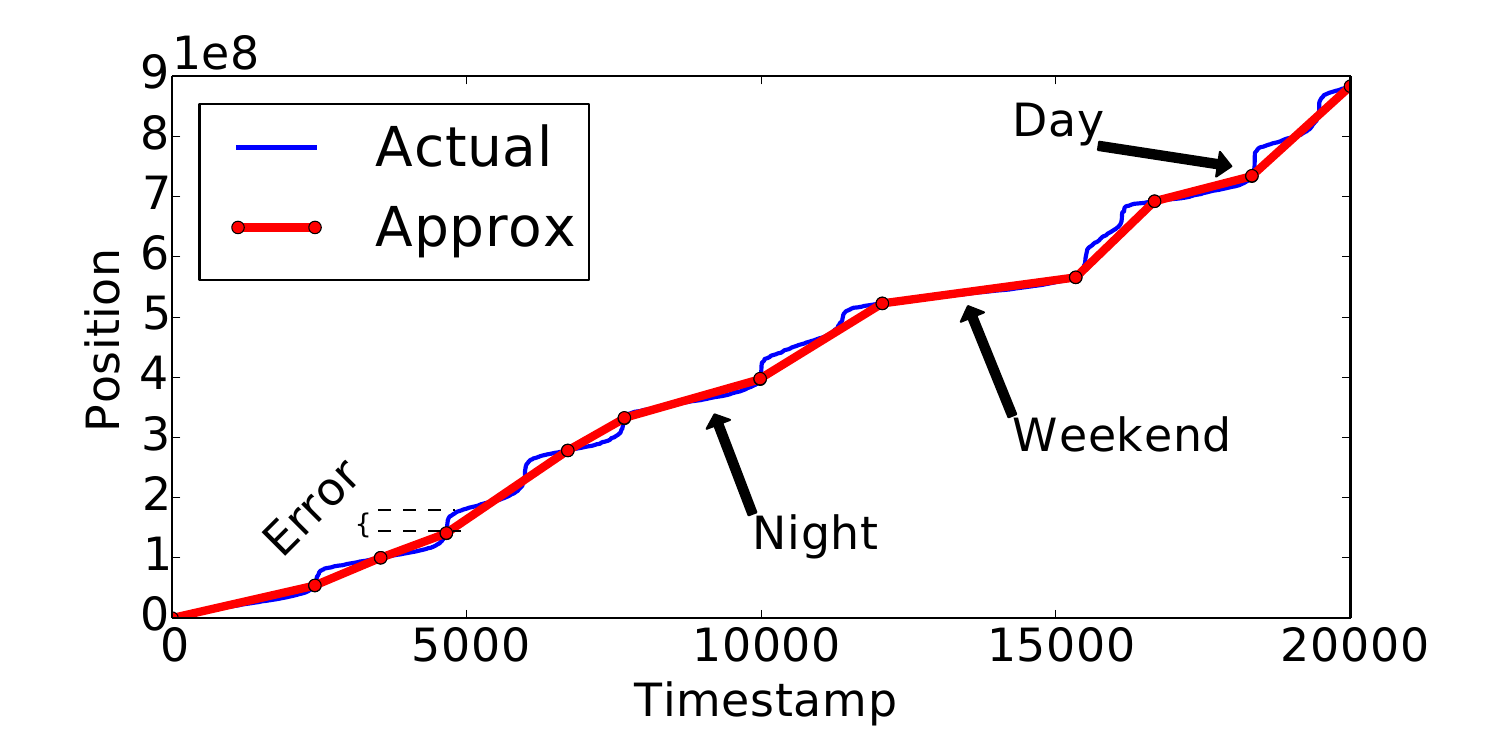}
  \vspace{-2mm}
\caption{Key to position mapping for IoT data.}
  \vspace{-3mm}
\label{fig:timestamp_mapping}
\end{center}
\end{figure}

\subsection{Function Representation}
\label{sec:overview:function_representation}
One key insight to our approach is that we can abstractly model an index as a monotonically increasing function that maps keys (i.e., values of the indexed attribute) to storage locations (i.e., its page and the offset within that page).
To explain this intuition, assume that all keys to be indexed are stored in a sorted array, allowing us to use an element's position in the array as its storage location.



As an example, consider the IoT dataset~\cite{mgbench}, which contains events from various devices (e.g., door sensors, motion sensors, power monitors) installed throughout a university building.
In this dataset, the data is sorted by the timestamp of an event, allowing us to construct a function that maps each timestamp (i.e., key) to its position in the dataset (i.e., position in a sorted array), as shown in Figure~\ref{fig:timestamp_mapping}.
Unsurprisingly, since the installed IoT devices monitor human activity, the timestamps of the recorded actions follow a pattern (e.g., there is little activity during the weekend and at night).

Since a function that represents an index can be arbitrarily complex and data-dependent, the precise function that maps keys to positions may not be possible to learn and is expensive to build and update.
Therefore, our goal is to approximate the function that represents the mapping of a key to a position.

To compactly capture trends that exist in the data while being able to efficiently build a new index and handle updates, we use a series of piece-wise linear functions to approximate an arbitrary function.
As shown in Figure~\ref{fig:timestamp_mapping}, for example, our segmentation algorithm (described further in Section~\ref{sec:bulk_loading}) partitions the timestamp values into several linear segments that are able to accurately reflect the various trends that exist in the data (e.g., less activity during the weekend).
Since the approximation captures trends in the data, it is agnostic to key density (a trend with sparse keys can be captured as well as a trend with dense keys).

Although more complex functions (e.g., higher order polynomials) can be used to approximate the true function, piece-wise linear approximation is significantly less expensive to compute.
This dramatically reduces (1) the initial index construction cost, and (2) improves insert latency for new items (see Section~\ref{sec:inserts}).

The resulting piece-wise linear approximation, however, is not precise (i.e., a key's predicted location is not necessarily its true position).
We therefore define the \textit{error} associated with our approximation as the maximum distance between the actual and predicted location of any key, as shown below, where $pred\_pos(k)$ and $true\_pos(k)$ return the predicated and actual position of an element $k$ respectively.

\vspace{-2ex}
\begin{equation}
    error = \max(| pred\_pos(k) - true\_pos(k)|) \ \forall \ k \ \in \ keys
\end{equation}

This formulation allows us to define the core building block of \system{}, a \textit{segment}.
A segment is a contiguous region of a sorted array for which any key is no more than a specified error threshold from its interpolated position.
Depending on the data distribution and the error threshold, the segmentation process will yield a different number of segments that approximate the underlying data.
Therefore, importantly, the error threshold enables us to balance memory consumption and performance.
After the segmentation process, \system{} stores the boundaries and slope of each segment (instead of each individual key) in a B+ tree, reducing the overall memory footprint of the index.


\subsection{\system{} Design}

As previously mentioned, our segmentation process partitions the key space of an attribute into disjoint linear segments such that the predicted position of any key inside a segment is no more than a bounded distance away from the key's true position.
\system{} organizes these segments in a tree to efficiently support insert and lookup operations.

In the following, we first discuss clustered indexes, where records are already sorted by the key that is being indexed.
Afterwards, we show how our technique can be extended to provide similar benefits for secondary indexes.

\subsubsection{Clustered Indexes}

\begin{figure}
\begin{center}
\includegraphics[width=0.78\columnwidth]{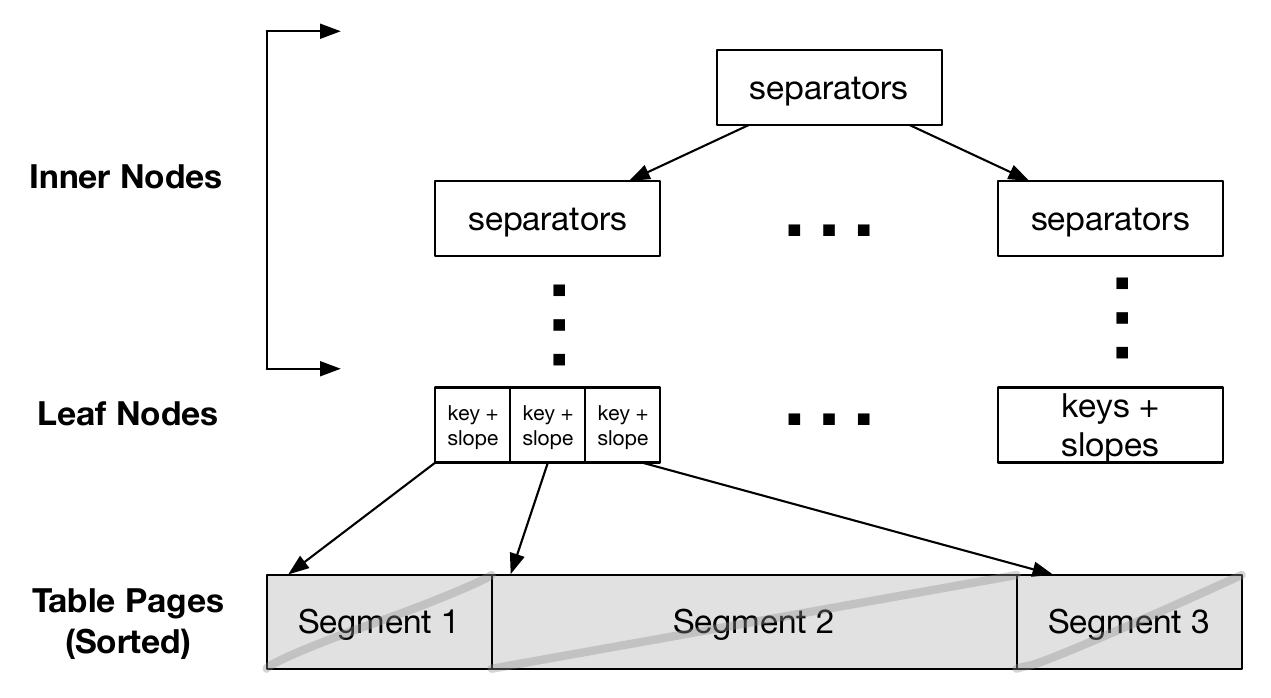}
\caption{A clustered \system{} index.}
  \vspace{-6mm}
\label{fig:clustered_tree}
\end{center}
\end{figure}

In a traditional clustered B+ tree, the table data is stored in fixed-size pages and the leaf level of the index contains only the first key of each of these table pages.
Unlike a clustered B+ tree, in a \system{}, the table data is partitioned into variable-sized segments (pages) that satisfy the given error threshold.
Each segment is essentially a fixed-size array, but successive segments can be allocated independently (i.e., non contiguously).

Figure~\ref{fig:clustered_tree} shows the structure of a clustered \system{} index.
As shown, the underlying data is partitioned into a series of variable-sized segments that approximate the distribution of keys to be indexed.
Depending on the error parameter and the data distribution, several consecutive keys can thus be summarized into a single segment.
Details of the segmentation algorithm that divides the table data into variable-sized segments are discussed in Section~\ref{sec:bulk_loading}.

Unlike a traditional B+ tree, each leaf node in a \system{} stores the segment's slope, starting key, and a pointer to a segment.
This allows us to use interpolation search in each segment since the data within this segment is approximated by a linear function given by the slope.

The inner nodes of a \system{} are the same as a B+ tree (i.e., lookup and insert operations are identical to a normal B+ tree).
However, once a lookup or an insert reaches the leaf level, \system{} needs to perform additional work.
For lookups, we need to use the slope and the distance to the starting key to calculate the key's approximate position (offset in the segment).
Since the resulting position is approximate, \system{} must then perform a local search (e.g., binary, linear) to find the item, discussed further in Section~\ref{sec:lookups}.

Insert operations also require additional work upon reaching a leaf level page, since we must ensure that the error threshold is always satisfied.
Therefore, we present two different insertion strategies (described in detail in Section~\ref{sec:inserts}).
The first strategy performs in-place updates to the segment (as a baseline) while the second strategy uses a more advanced buffer-based strategy to hold inserted data items.

\begin{figure}
\begin{center}
\includegraphics[width=0.78\columnwidth]{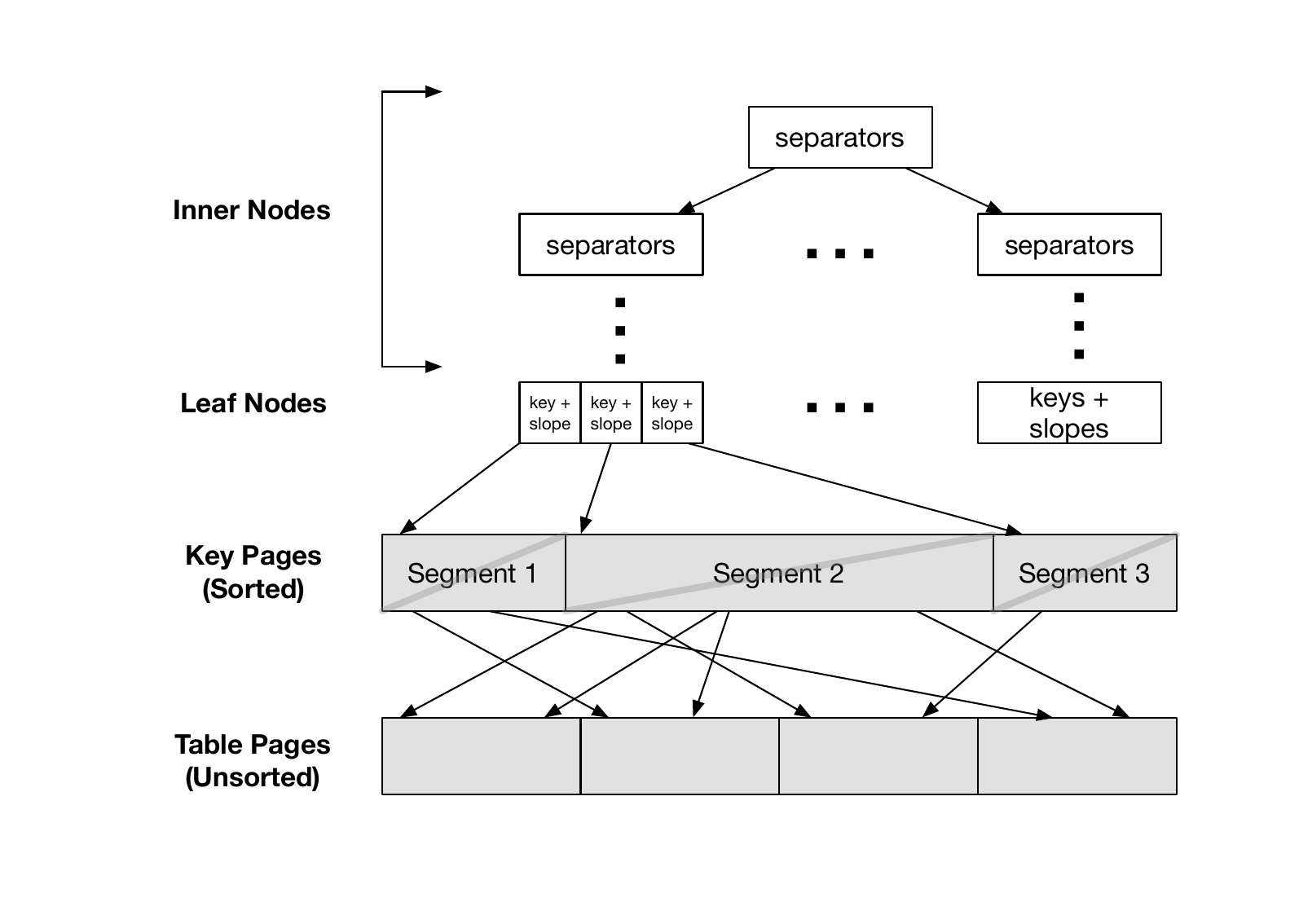}
\caption{A non-clustered \system{} index.}
\label{fig:non_clustered_tree}
\end{center}
\end{figure}


Finally, instead of internally using a B+ tree to locate the segment for a key, \system{} could instead use any other index structure.
For example, if the workload is read-only, other high performance index structures (e.g., FAST~\cite{FAST}) can be used.
In Section~\ref{sec:experiments:other_indexes} we show how \system{} performs when using different internal data structures, including FAST.



\subsubsection{Non-clustered Indexes}
\label{sec:non-clustered}

Secondary indexes can dramatically improve the performance of queries involving selections over a non-primary key attribute.
Without secondary indexes, these queries must examine all tuples, which is often prohibitive.
Unlike a clustered index, a non-primary key attribute is not sorted and may contain duplicates.

The primary difference between a clustered and an non-clustered \system{} is that a non-clustered \system{} requires an additional ``indirection layer'' (called ``Key Pages'' in Figure~\ref{fig:non_clustered_tree}).
This layer is essentially an array of pointers that is the same size as the data but is sorted by the key that is being indexed.
For example, the first position in this indirection layer will contain a pointer to the smallest value of the key being indexed.
Note that a secondary B+ tree that uses fixed-size paging also requires this indirection layer.

The first step in creating a non-clustered \system{} is to build the indirection layer by sorting the data by the indexed key (e.g., temperature, age) and materializing the array of pointers to each data item in the sorted order.
Next, like in the clustered index case, the segmentation algorithm scans the indirection layer and produces a valid set of segments that are then inserted into the upper level tree.

All operations on a non-clustered \system{} internally operate on the indirection layer.
For example, for a lookup, the returned position of the data item is its position in the indirection layer.
Then, to access the value, \system{} follows the pointer in the indirection layer at the predicted position.

Although the sorted level of key pages in a non-clustered \system{} introduces additional overhead compared to a clustered \system{} index, this overhead occurs in any non-clustered (secondary) index.
However, as we show in our experiments, a non-clustered \system{} is significantly smaller than a non-clustered B+ tree with fixed-size pages since it has fewer leaf and internal nodes.




\section{Segmentation}
\label{sec:bulk_loading}

In the following, we describe how \system{} partitions the key space of an attribute into variable-sized segments that satisfy the specified error.
After this process, each segment is inserted into a B+ tree to enable efficient insert and lookup operations, described further in Section~\ref{sec:lookups} and Section~\ref{sec:inserts}.

\subsection{Design Choices}



\begin{figure}
\centering
\resizebox{4cm}{3cm}{
\begin{tikzpicture}[domain=0:4] 
\draw[->] (0,0) -- (4.2,0) node[below] {$key$}; 
    \draw[->] (0,0) -- (0,4.2) node[left] {$loc$};
    \filldraw (0.6,0.6) circle (2pt) node[align=center,   below] {$\left(x_1,y_1\right)$};
    \filldraw (1.2,3.4) circle (2pt) node[align=center,   above] {$\left(x_2,y_2\right)$};
    \filldraw (3.5,3.5) circle (2pt) node[align=center,   above] {$\left(x_3,y_3\right)$};
    \draw [dashed] (0.6,0.6) -- (3.5,3.5);
    \draw [<->, red, very thick] (1.2,1.2) -- (1.2,3.35);
    \draw (1.25,2.5) node[right] {$>error$};
\end{tikzpicture}
}
\caption{A segment from $\left(x_1,y_1\right)$ to $\left(x_3,y_3\right)$ is not valid if $\left(x_2,y_2\right)$ is further than $error$ from the interpolated line.}
\label{fig:max_segment}
\end{figure}

A common objective when fitting a function is to minimize the least square error (minimizing the second error norm $E_2$).
Unfortunately, such an objective does not provide a guarantee for the maximal error and therefore does not provide a bound on the number of locations which must be scanned after interpolating a key's position. 
Therefore, our objective is to satisfy a maximal error ($E_\infty$), demonstrated in Figure~\ref{fig:max_segment}.

While several optimal (in the number of produced segments) piece-wise linear approximation algorithms exist, these techniques are prohibitively expensive (e.g., a dynamic programming algorithm~\cite{piecewise} has a runtime of $O(n^3)$ using $O(n)$ memory).
Second, most existing online piece-wise linear approximation algorithms~\cite{Elmeleegy:2009:OPL:1687627.1687645,keogh2001online} have a high storage complexity and/or do not guarantee a maximal error.
Lastly, even linear time algorithms may not be efficient enough since multiplicative constants have a significant effect.

Therefore, to be able to efficiently (1) construct the index, and (2) support inserts, we need a highly efficient one-pass linear algorithm.
This focus on efficiency led us to choose linear piece-wise functions, since higher order approximations often incur additional costs as previously discussed.

\begin{algorithm}[t]
\scriptsize
\caption{\textsc{ShrinkingCone} Segmentation}
\label{alg:streaming_alg}
\begin{codebox}
    \li $sl_{high} \gets \infty$
    \li $sl_{low} \gets 0$
    \li the first key is the segment origin
    \li \For every $k\in$ keys (in increasing order)
        \li \Do
        \If $k$ is inside the cone:
            \li \Then update $sl_{high}$
            \li update $sl_{low}$
            \li \Else key $k$ is the origin of a new segment
            \li $sl_{high} \gets \infty$
            \li $sl_{low} \gets 0$
        \End
    \End
\end{codebox}
\end{algorithm}

In the following, we describe a proposed segmentation algorithm, similar to FSW~\cite{Xu:2012:AAO:2247596.2247620, Liu:2008:NOM:1477069.1477485}, which is linear in runtime, has low constant memory usage, and guarantees a maximal error in each segment.
Importantly, though, we address (1) how to extend these techniques to indexing, including looking up and inserting data items, (2) prove that, in the worst case, segments are bounded in size, and (3) analyze the algorithm and compare it to an optimal algorithm.
\subsection{Segment Definition}
\label{sec:bulk_loading:segment_defi}

As previously described, a segment is a region of the key space that can be represented by a linear function whereby all keys are within a bounded distance from their linearly interpolated position.
More specifically, a segment is represented by the first point (first key) and by the last point (the last key) in the segment. 
Using this definition, we can fit a linear function to the locations of keys in the segment (using the start key, end key, and the number of positions). 

Recall that every segment must satisfy the maximal error (i.e., a key's predicated position is at most $error$ number of elements away from its true position).
This leads to an important property (proof in Appendix~\ref{prf:segment-size}) of a \textit{maximal segment} (a segment is maximal when the addition of a key will violate the specified error):
\begin{theorem}
\label{thm:seg_size}
The minimal number of locations covered by a maximal linear segment is $error + 1$.
\end{theorem}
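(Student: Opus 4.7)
My plan is to prove the contrapositive: any segment covering $n \le error$ locations cannot be maximal, because it can always be extended by one more location while still satisfying the $E_\infty$ bound. This will immediately give the claim that a maximal segment must cover at least $error + 1$ locations.

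The key geometric observation I would establish first is an \emph{interval-containment lemma} for linear interpolation: if a segment contains keys at $n$ consecutive positions $y_1, y_1+1, \dots, y_1+n-1$, then the endpoint-based interpolating line $\hat y(x) = y_1 + s\,(x - x_1)$ with $s = (y_n - y_1)/(x_n - x_1)$ maps every $x \in [x_1, x_n]$ into the interval $[y_1, y_n]$. This is just monotonicity of linear interpolation between two endpoints. Since every actual position $y_i$ also lies in $[y_1, y_n]$, the pointwise error $|\hat y(x_i) - y_i|$ is bounded above by $y_n - y_1 = n - 1$.

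With this lemma in hand, the argument is short. Suppose, toward contradiction, that a maximal segment covers only $n$ locations with $n \le error$. Consider extending it by appending the next key, producing a candidate segment of $n+1$ contiguous locations. Applying the lemma to the extended segment gives a maximum interpolation error of at most $n \le error$, so the extended segment still satisfies the error threshold. This contradicts maximality of the original segment. Therefore every maximal segment must span at least $error + 1$ locations.

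The only delicate step I anticipate is verifying the interval-containment lemma cleanly under the paper's conventions (keys sorted, positions being integer indices in the sorted array, slope computed from the first and last points of the segment). Once the range of $\hat y$ is pinned down to $[y_1, y_n]$, everything else is an arithmetic bookkeeping exercise: the maximum separation between any interpolated value and any actual value is the length of the enclosing integer interval, namely $n-1$, so the error threshold is automatically respected whenever $n \le error + 1$, and in particular the one-step extension from $n$ to $n+1$ is always safe when $n \le error$.
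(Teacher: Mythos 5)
Your proof is correct, but it takes a genuinely different route from the paper's. The paper argues directly: it fixes the first point $(x_3,y_3)$ whose inclusion violates the error bound, writes out the violated inequality at some intermediate point $(x_2,y_2)$, and then algebraically rearranges (adding and subtracting $y_3$, using integrality $y_3 - y_2 \ge 1$, and the strict factor $\frac{x_3 - x_1}{x_3 - x_2} > 1$) to conclude $y_3 - y_1 > error + 1$, hence the segment spans at least $error + 1$ positions. Your argument instead proves the contrapositive and isolates a clean geometric fact: the endpoint-anchored interpolating line maps $[x_1,x_n]$ into $[y_1,y_n]$, and every actual position in the segment also lies in $[y_1,y_n]$, so the pointwise error is trivially at most $y_n - y_1 = n-1$. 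Extending a segment of $n \le error$ consecutive locations to $n+1$ locations therefore keeps the error at most $n \le error$, contradicting maximality. Your approach is shorter and more transparent about why the bound holds --- it avoids the inequality manipulation entirely and additionally makes explicit the stronger observation that \emph{any} span of at most $error + 1$ consecutive positions automatically satisfies the error bound, independent of the key distribution. Both arguments implicitly assume the segment's endpoint keys are distinct (so the interpolation slope is defined); the paper states this as $x_1 < x_2 < x_3$ and you should state the analogous assumption $x_1 < x_{n+1}$ when writing this up, but this is the same standing assumption and not a gap.
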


This allows us to quantify how bad a "worst case" (i.e., dataset and error threshold that produce maximal number of segments) can be.
Since the minimum number of locations covered by a maximal segment is bound by the error, the total size of a \system{} is also bounded.
Therefore, in the worst case (every maximal segment covers $error + 1$ locations), a \system{} will not be larger than an index that uses fixed-size pages of size $error$ (e.g., B+ tree).

\begin{figure}
\centering
\resizebox{4cm}{3cm}{
\begin{tikzpicture}[domain=0:4] 
\draw[->] (0,0) -- (4.2,0) node[below] {$key$}; 
    \draw[->] (0,0) -- (4.2,0) node[below] {$key$}; 
    \draw[->] (0,0) -- (0,4.2) node[left] {$loc$};
    \filldraw (0.6,0.6) circle (2pt) node[left] {1};
    \filldraw (3.5,3.5) circle (2pt) node[right] {4};
    \draw [<->, thick] (1.2,0.8) -- (1.2,1.8);
    \draw [blue, dashed] (0.6,0.6) -- (4.2,1.8);
    \draw [blue, dashed] (0.6,0.6) -- (2.4,4.2);
    \draw [<->, thick] (2.4,1) -- (2.4,2);
    \draw [red, dotted, thick] (0.6,0.6) -- (4.2,3.4);
    \draw [red, dotted, thick] (0.6,0.6) -- (4.2,1.8);
    \filldraw [blue] (1.2,1.3) circle (2pt) node[right] {2};
    \filldraw [red] (2.4,1.5) circle (2pt) node[right] {3};
\end{tikzpicture}
}
\caption{ShrinkingCone - Point 1 is the origin of the cone. Point 2 is then added, resulting in the dashed cone. Point 3 is added next, yielding in the dotted cone. Point 4 is outside the dotted cone and therefore starts a new segment.}
\label{fig:streamin_alg}
\vspace{-3mm}
\end{figure}

\subsection{Segmentation Algorithm}
\label{sec:bulk_loading:online}

As previously mentioned, we needed a fast and efficient algorithm rather than an optimal one.
We therefore chose to use a greedy streaming algorithm \textsc{ShrinkingCone} (Algorithm~\ref{alg:streaming_alg}) which, given a starting point (key) of a segment, attempts to maximize the length of a segment while satisfying a given error threshold. \textsc{ShrinkingCone} is similar to FSW \cite{ Liu:2008:NOM:1477069.1477485} but considers only monotonically increasing functions and can produce disjoint segments.
The main idea behind \textsc{ShrinkingCone} is that a new key can be added to a segment if and only if it does not violate the error constraint of any previous key in the segment. 


More specifically, we define a cone by the triple: origin point (the key and its location), high slope ($sl_{high}$), and low slope ($sl_{low}$).
The combination of the starting point and the low slope gives the lower bound of the cone, and the combination of the starting point and the high slope gives the upper bound of the cone.
Intuitively, the cone represents the family of feasible linear functions for a segment starting at the origin of the cone (the high and low slopes represent the range of valid slopes).
When a new key is added to a segment, the high and low slopes are calculated using the key and the key's position plus $error$ and minus $error$ (respectively).
In the update step (lines 6-7 of Algorithm~\ref{alg:streaming_alg}), the lowest high slope and the highest low slope values are then selected (between the newly calculated and previous slopes).
Therefore, the cone either narrows (the high slope decreases and/or the low slope increases), or stays the same. 
If a new key to be added to the segment is outside of the cone, there must exist at least one previous key in the segment for which the error constraint will be violated.
Therefore, a new key that is not inside the cone cannot be included in the segment, and becomes the origin point of the new segment.


Figure~\ref{fig:streamin_alg} illustrates how the cone is updated: point 1 is the origin of the cone. Point 2 updates both the high and low slopes. Point 3 is inside the cone, however it only updates the upper bound of the cone (point 3 is less than $error$ above the lower bound). Point 4 is outside of the updated cone, and therefore will be the first point of a new segment.

\begin{table}
\center
\tiny
\begin{tabular}{lcccr}
     \toprule
     Dataset & error & \textsc{ShrinkingCone} & Optimal & Ratio  \\
     \midrule
     Taxi drop lat & 10 & 5358 & 4996 & 1.07 \\
     Taxi drop lat & 100 & 351 & 271 & 1.29 \\
     Taxi drop lat & 1000 & 51 & 48 & 1.06 \\
     Taxi drop lon & 10 & 1198 & 1138 & 1.05 \\
     Taxi drop lon & 100 & 371 & 325 & 1.14 \\
     Taxi drop lon & 1000 & 40 & 37 & 1.08 \\
     Taxi pick time & 10 & 6238 & 4359 & 1.43 \\
     Taxi pick time & 100 & 165 & 137 & 1.2 \\
     OSM lon & 10 & 7727 & 6027 & 1.28 \\
     OSM lon & 100 & 101 & 63 & 1.6 \\
     Weblogs & 10 & 16961 & 14179 & 1.2 \\
     Weblogs & 100 & 909 & 642 & 1.42 \\
     IoT & 10 & 8605 & 6945 & 1.24 \\
     IoT & 100 &723 & 572 & 1.26 \\
     \bottomrule
\end{tabular}
\vspace{2mm}
\caption{\textsc{ShrinkingCone} compared to optimal.}
\vspace{-10mm}
\label{tab:opt_comparisson}
\end{table}

\subsection{Algorithm Analysis}
\label{sec:alg-analysis}
While the \textsc{ShrinkingCone} algorithm has a runtime of $O(n)$ and only uses a small constant amount memory (to keep track of the cone), it is not optimal. 
Moreover, for a given maximal error and an adversarial dataset the number of segments that it produces can be arbitrarily worse than an optimal algorithm, as we prove in Appendix~\ref{prf:ca}.

Although \textsc{ShrinkingCone} can be arbitrarily worse compared to optimal segmentation for a given maximal error, there is a limit for how bad it can be in practice since we do have a guarantee that a maximal segment covers at least $error + 1$ locations.

The maximum number of segments \textsc{ShrinkingCone} produces is at most $\min\left(\frac{|keys|}{2},\frac{|D|}{error+1}\right)$, where $|D|$ is the size of the dataset.
This guarantee stems from Theorem~\ref{thm:seg_size}: no input with less than 3 keys spanning at least $error + 2$ positions will cause \textsc{ShrinkingCone} to create a new segment.
Thus, compared to traditional B+ trees, in the worst case, \system{} will produce no more segments (pages) than a B+ tree that uses fixed-size pages (of size $error$).


To evaluate \textsc{ShrinkingCone}, we implemented the optimal algorithm (runtime of $O(n^2)$ and memory consumption of $O(n^2)$) using $10^6$ elements from real-world datasets: NYC Taxi Dataset~\cite{NYCtaxi}, OpenStreetMap~\cite{OSMdata}, Weblogs~\cite{mgbench}, and IoT~\cite{mgbench}.
Table~\ref{tab:opt_comparisson} shows the number of segments generated by the optimal algorithm and by \textsc{ShrinkingCone}.
As shown, the number of segments that our algorithm produces is comparable to the number of segments in the optimal case.


\section{Index Lookups}
\label{sec:lookups}

One of the most important operations of an index is to lookup a single key or a range of keys.
However, since each entry in the leaf level of \system{} points to a segment, performing a lookup requires first locating the segment that a key belongs to and then performing a local search inside the segment.
In the following, we first describe how \system{} performs lookup operations for a single key and then show how we can extend this technique to range predicates.

\subsection{Point Queries}
The process of searching a \system{} for a single element involves two steps: (1) searching the tree to find the segment that the element belongs to, and (2) finding the element within a segment.
These steps are outlined in Algorithm~\ref{algo:lookup}.

\subsubsection{Tree Search}
Since, as previously described, each segment is stored in a B+ tree (with its first key as the key and the segment's slope and a pointer to the table page as its value), we must first search the tree to find the segment that a given key belongs to.
To do this, we begin traversing the B+ tree from the root to the leaf, using standard tree traversal algorithms.
These steps, outlined in the \textsc{SearchTree} function of Algorithm~\ref{algo:lookup}, terminate when reaching a leaf node which points to the segment that contains the key.

Since the B+ tree is used to index segments rather than individual points, the runtime for searching for the segment that a key belongs in is $O(log_b(p))$, where $b$ is a constant representing the fanout of the tree (i.e., number of separators inside a node) and $p$ is the number of segments created during the segmentation process.

\subsubsection{Segment Search}
Once \system{} finds the segment for a key, it then must find the key's position inside the segment.
Recall that segments are created such that an element is no more than a constant distance ($error$) from the element's position determined through linear interpolation.
Other techniques for interpolation search inside a fixed-size index page are discussed in ~\cite{interpolation_search}.

To compute the approximate location of a key $k$ within a given segment $s$, we subtract the key from the first key that appears in the segment $s.start$.
Then, we multiply the difference by the segment's slope $s.slope$, as shown below in the following equation.

\begin{equation}
    pred\_pos = (k - s.start) \times s.slope
\end{equation}

After interpolating an element's position, the true position of an element is guaranteed to be within the error threshold.
Therefore, \system{} locally searches the following region using binary search (as shown in Algorithm~\ref{algo:lookup}).

\begin{equation}
    true\_pos \in [pred\_pos - error, pred\_pos + error]
\end{equation}

However, it is also important to note that any search algorithm, including linear search, binary search, or exponential search can also be used depending on the specific scenario (e.g., hardware properties, error threshold).


Since segments satisfy the specified error condition, the cost of searching for an element inside a segment is bounded.
More specifically, the runtime for locating an element inside a segment is $O(log_2(error))$ where $error$ is constant.


\subsection{Range Queries}

Range queries, unlike point queries, have the additional requirement that they examine every item in the specified range.
Therefore, for range queries, the selectivity of the query (i.e., number of tuples that satisfy the predicate) has a large influence on the total runtime of the query.

However, like point queries, range queries must also find a single tuple: either the start or the end of the range.
Therefore, \system{} uses the previously described point lookup techniques to find the beginning of the specified range.
Then, since segments either store keys contiguously (clustered index) or have an indirection layer with pointers that is sorted by the key (non-clustered index), \system{} can simply scan from the starting location until it finds a key that is outside of the specified range.
For a clustered index, scanning the relevant range performs only very efficient sequential access, while for a non-clustered index, range queries require random memory accesses (which is true for any non-clustered index).

\begin{algorithm}[t]
\scriptsize
\begin{codebox}
\Procname{$\proc{Lookup}\mathit{(tree,key)}$}
\li $seg \gets \func{SearchTree}\mathit{(tree.root,key)}$
\li $val \gets \func{SearchSegment}\mathit{(seg,key)}$
\li \Return $val$
\End
\end{codebox}
\begin{codebox}
\Procname{$\proc{SearchTree}(node,key)$}
\li $i \gets 0$
\li \While $key < node.keys[i]$ 
\li \Do
$i \gets i+1$
\End
\li \If $node.value[i].isLeaf()$
\li \Then
$j \gets 0$
\li \While $key < node.values[j]$
\li \Do
$j \gets j+1$
\End
\li \Return $node.values[j]$
\End
\li \Return \func{SearchTree}$(node.values[i],key)$
\End
\end{codebox}
\begin{codebox}
\Procname{$\proc{SearchSegment}(seg,key)$}
\li $\mathit{pos \gets (key-seg.start) \times seg.slope}$
\li \Return \func{BinarySearch}$\mathit{(seg.data, pos-error, pos+error}, key)$
\End
\end{codebox}
\caption{Lookup Algorithm}
\label{algo:lookup}
\end{algorithm}
\section{Index Inserts}
\label{sec:inserts}
Along with locating an element, an index needs to be able to handle insert operations.
In some applications, maintaining a strict ordering guarantee is necessary, in which case \system{} should ensure that new items are inserted in-place.
However, in situations where this is not the case, we've developed a more efficient insert strategy that improves insert throughput.
In the following, we discuss each of these strategies for inserting new items into a \system{}.
Then, in Section~\ref{sec:experiments:inserts}, we show how these strategies compare for various workloads and parameters.


\subsection{In-place Insert Strategy}
\label{sec:inserts:inplace}

In a typical B+ tree that uses paging, pages are left partially filled and new values are inserted into an empty slot using an in-place strategy.
When a given page is full, the node is split into two nodes, and the changes are propagated up the tree (i.e., the inner nodes in the tree are updated).



Although similar, insert operations in \system{} require additional consideration since any key in the segment must be no more than specified error amount ($error$) away from its interpolated position.
Importantly, in-place inserts require moving keys in the page to preserve the order of the keys.

Without any a priori knowledge about the error of a given key, any attempt to move the key requires checking to see if the error condition is satisfied.
To make matters worse, a single insert may require moving many keys (in the worst case, all keys in the page) to maintain the sorted order. 
Thus, we must have a priori knowledge about any given key to determine if it can be moved in any direction while preserving the error guarantee.

Similar to the fill factor of a page, we divide the specified  $error$ in 2 parts: the segmentation error $e$ (error used to segment the data), and an insert budget $\varepsilon$ (number of locations a key can be moved in any direction). 
To preserve the specified error, we require that $error = e + \varepsilon$.
By keeping an insert budget for each page, \system{} can ensure that inserting a new element will not violate the error for the page.

More specifically, given a segment $s$, the page has a total size of $|s| + 2\varepsilon$ ($|s|$ is the number of locations in the segment).
Data is placed in the middle of the new page, yielding $\varepsilon$ empty locations at the beginning and end of the page.

With this strategy, it is possible to move any key in a direction which has free space without violating the error condition.
Therefore, to insert a new item using an in-place insert strategy, \system{} first locates the position in the page where the new item belongs.
Then, depending on which end of the page (left or right) is closer, all elements are shifted (either left or right) into the empty region of the page.
Once all of the empty spaces are filled, the segment needs to be re-approximated (using the segmentation algorithm described in Section~\ref{sec:bulk_loading}).
If the segmentation algorithm produces more than one segment, we create $n$ new segments (where $n$ is the number of segments produced after running Algorithm~\ref{alg:streaming_alg} on the single segment that is now full).
Finally, each new segment is inserted into the upper level tree, and any references to the old segment are deleted.

\subsection{Delta Insert Strategy}
\label{sec:inserts:delta}
Since the segments in \system{} are of variable size, the cost of an insert operation using the previously described in-place insertion strategy can be high, particularly for large error thresholds or uniform data that produces large segments with many data items.
Specifically, on average, $\frac{|s|}{2}$ keys may need to be moved for a single insert operation, where $|s|$ is the number of locations in the segment.
Therefore, a better approach for inserting new data items into a \system{} should amortize the cost of moving keys in the segment.

\begin{algorithm}[t]
\scriptsize
\begin{codebox}
\Procname{$\proc{InsertKey}\mathit{(tree,key)}$}
\li $\mathit{seg} \gets \func{SearchTree}\mathit{(tree.root,key)}$
\li $\mathit{seg.buffer.insert(key)}$
\li \If $\mathit{seg.buffer.isFull()}$
\li \Then
\li $\mathit{segs = \proc{segmentation}(seg.data, seg.buffer)}$
\li \For $s \in \mathit{segs}$
\li \Do
    $\mathit{tree.insert(s)}$
\End
\li $\mathit{tree.remove(seg)}$
\End
\li \Return
\End
\end{codebox}
\caption{Delta Insert Algorithm}
\label{algo:insert}
\end{algorithm}

To reduce the overhead of moving data items inside a page when inserting a new item, each segment in a \system{} contains an additional fixed-size buffer instead of extra space at each end.
More specifically, as shown in Algorithm~\ref{algo:insert}, new keys are added to the buffer portion of the segment for which the key belongs to (line 2).
This buffer is kept sorted to enable efficient search and merge operations.

Once the buffer reaches its predetermined size ($\mathit{buff})$, it is combined with the data in the segment and then re-segmented using the previously described segmentation algorithm (Algorithm~\ref{alg:streaming_alg}) to create a series of valid segments that satisfy the error threshold (line 5).
Note that depending on the data, the number of segments after this process can be one (i.e., the data inserted into the buffer does not violate the error threshold) or several.
Finally, each of the new segments generated from the segmentation process are inserted into the tree (line 6-7) and the old page is removed (line 8).

Storing additional data inside a segment impacts how to locate a given item, as well as how the error is defined.
Since adding a buffer for each segment can violate the error guarantees that \system{} provides, we transparently incorporate the buffer's size into the error parameter for the segmentation process.
More formally, given a specified error of $error$, we transparently set the error threshold for the segmentation process to ($\mathit{error-buff}$).
This ensures that a lookup operation will satisfy the specified error even if the element is located in the buffer.

The overall runtime for inserting a new element into a \system{} is the time required to locate the segment and add the element to the segment's buffer.
With $p$ pages stored in a \system{}, and a fanout of $b$ (i.e., number of keys in each internal separator node), inserting a new key into a \system{} has the following runtime: 
\begin{equation}
    \mathit{insert \ runtime: O(log_bp) + O(buff)}
\end{equation}
Note that when the buffer is full and the segment needs to be re-segmented, the runtime has an additional cost of $O(d)$, where $d$ is the sum of a segment's data size and buffer size.
Additionally, if the write-rate is very high, we could also support merging algorithms that use a second buffer similar to how column stores merge a write-optimized delta to the main compressed column.
However, this is an orthogonal consideration that heavily depends on the read/write ratio of a workload and is outside the scope of this paper.

\section{Cost Model}
\label{sec:cost_model}

Since the specified error threshold affects both the performance of lookups and inserts as well as the index's size, the natural question follows: how should a DBA pick the error threshold for a given workload?
To navigate this tradeoff, we provide a cost model that helps a DBA pick a ``good'' error threshold when creating a \system{}.
At a high level, there are two main objectives that a DBA can optimize: performance (i.e., lookup latency) and space consumption~\cite{monkey,tradeoffs}.
Therefore, we present two ways to apply our cost model that help a DBA choose an error threshold.

\subsection{Latency Guarantee}
For a given workload, it is valuable to be able to provide latency guarantees to an application.
For example, an application may require that lookups take no more than a specified time threshold (e.g., $1000$ns) due to SLAs or application-specific requirements (e.g., for interactive applications).
Since \system{} incorporates an error term that in turn affects performance, we can model the index's latency in order to pick an error threshold that satisfies the specified latency requirement.

As discussed, lookups require finding the relevant segment and then searching the segment (data and buffer) for the element.
Since the error threshold influences the number of segments that are created (i.e., a smaller error threshold yields more segments), we use a function that returns the number of segments created for a given dataset and error threshold.
This function can either be learned for a specific dataset (i.e., segment the data using different error thresholds) or a general function can be used (e.g., make the simplifying assumption that the number of segments decreases linearly as the error increases).
We use $S_e$ to represent the number of resulting segments for given dataset using an error threshold of $e$.

Therefore, the total estimated lookup latency for an error threshold of $e$ can be modeled by the following expression, where $b$ is the tree's fanout, $\mathit{buff}$ is a segment's maximum buffer size, and $c$ is a constant representing the latency (in ns) of a cache miss on the given hardware (e.g., $50ns$).
Moreover, the cost function assumes binary search for the area that needs to be searched within a segment bounded by $e$ as well as searching the complete buffer.

\vspace{-0.3cm}
\begin{equation} \label{eq1}
\begin{split}
\textsc{latency}(e) = c \underbrace{
        \big[
        log_b(S_e)
        }_\text{Tree Search}
        +
      \underbrace{
        log_2(e)
        }_\text{Segment Search} 
  + \underbrace{
        \mathit{log_2(buff)}
        \big]}_\text{Buffer Search} 
\end{split}
\end{equation}

Setting $c$ to a constant value implies that all random memory accesses have a constant penalty but caching can often change the penalty for a random access.
In theory, instead of being a constant, $c$ could be a function that returns the penalty of a random access but we make the simplifying that $c$ is a constant.

Using this cost estimate, the index with the smallest storage footprint that satisfies the given latency requirement $L_{req}$ (in nanoseconds) is given by the following expression, where $E$ represents a set of possible error values (e.g., $E = \{10,100,1000\}$) and $\textsc{SIZE}$ is a function that returns the estimated size of an index (defined in the next section).

\vspace{-0.3cm}
\begin{equation}
e = \argmin_{\{e \in E\  \ | \ \textsc{LATENCY}(e) \ \le \ L_{req} \} } \big( \textsc{SIZE}(e) \big)
\end{equation}

In addition to modeling the latency for lookup operations, we can similarly model the latency for insert operations.
However, there are a few important differences.
First, inserts do not have to probe the segment.
Also, instead of searching a segment's buffer, inserts require adding the item to the buffer in sorted order.
Finally, we must also consider the cost associated with splitting a full segment.

\subsection{Space Budget}

Instead of specifying a lookup latency bound, a DBA can also give \system{} a storage budget to use.
In this case, the goal becomes to provide the highest performance (i.e., lowest latency for lookups and inserts) while not exceeding the specified storage budget.

More formally, we can estimate the size of a read-only clustered index (in bytes) for a given error threshold of $e$ using the following function, where again $S_e$ is the number of segments that are created for an error threshold of $e$, and $b$ is the fanout of the tree.


\vspace{-0.3cm}
\begin{equation}
\textsc{SIZE}(e) = \underbrace{
                    (S_e \cdot log_b(S_e) \cdot 16B)
                    }_\text{Tree}
                    +
                    \underbrace{
                    (S_e \cdot 24B)
                    }_\text{Segment}
\end{equation}

The first term is a pessimistic bound on the storage cost of the tree (leaf + internal nodes using 8 byte keys/pointers), while the second term represents the added metadata about each segment (i.e., each segment has a starting key, slope, and pointer to the underlying data, each 8 bytes).

Therefore, the smallest error threshold that satisfies a given storage budget $S_{req}$ (given in bytes) is given by the following expression where again $E$ represents a set of all possible error values (e.g., $E = \{10,100,1000\}$).

\vspace{-0.3cm}
\begin{equation}
e = \argmin_{\{e \in E\  \ | \ \textsc{SIZE}(e) \ \le \ S_{req} \} } \big( \textsc{Latency}(e) \big)
\end{equation}

As we show in Section~\ref{sec:experiments:cost_model}, our cost model can accurately estimate the size of a \system{} over real-world datasets, providing DBAs with a valuable way to balance performance (i.e., latency) with the storage footprint of a \system{}.


\section{Evaluation}
\label{sec:experiments}

\begin{figure*}[!ht]
  \begin{minipage}[b]{.32\textwidth}
    \centering
    \includegraphics[width=.96\linewidth]{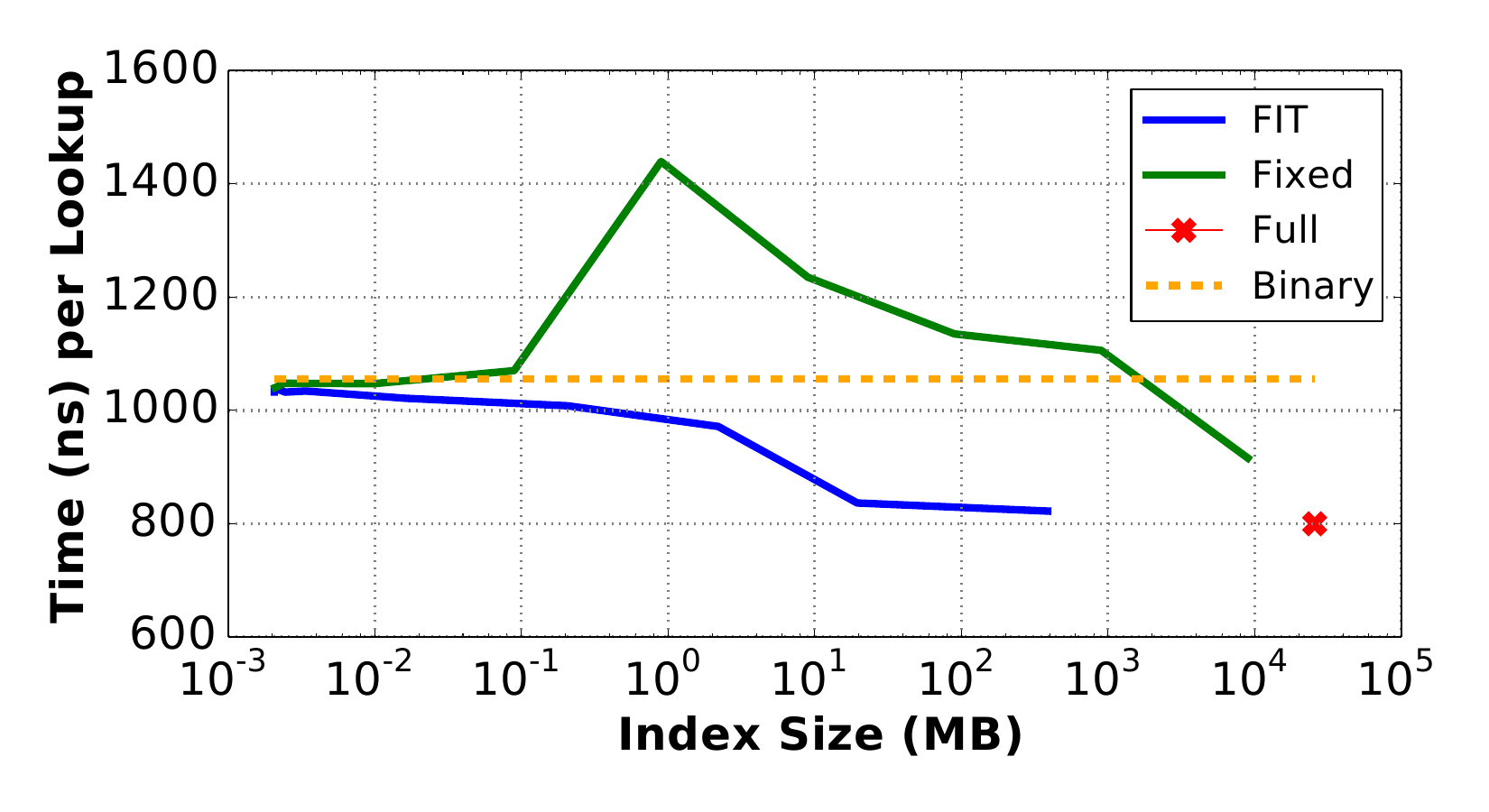}
    \vspace{-2mm}
    \subcaption{Weblogs}
    \label{fig:weblogs_lookup}
  \end{minipage}
  \hfill
  \begin{minipage}[b]{.32\textwidth}
    \centering
    \includegraphics[width=.96\linewidth]{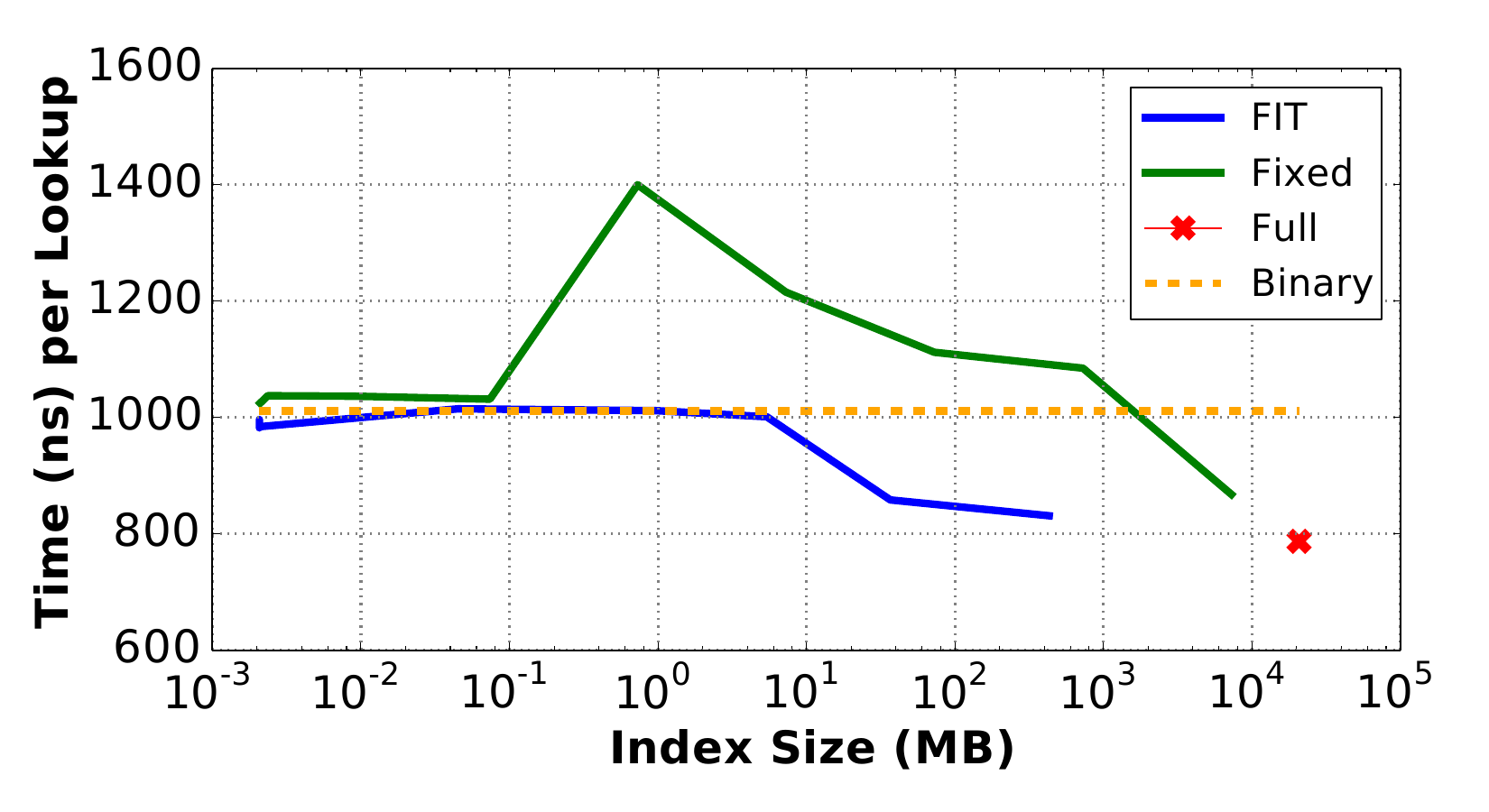}
    \vspace{-2mm}
    \subcaption{IoT}
    \label{fig:iot_lookup}
  \end{minipage}
  \hfill
  \begin{minipage}[b]{.32\textwidth}
    \centering
    \includegraphics[width=.96\linewidth]{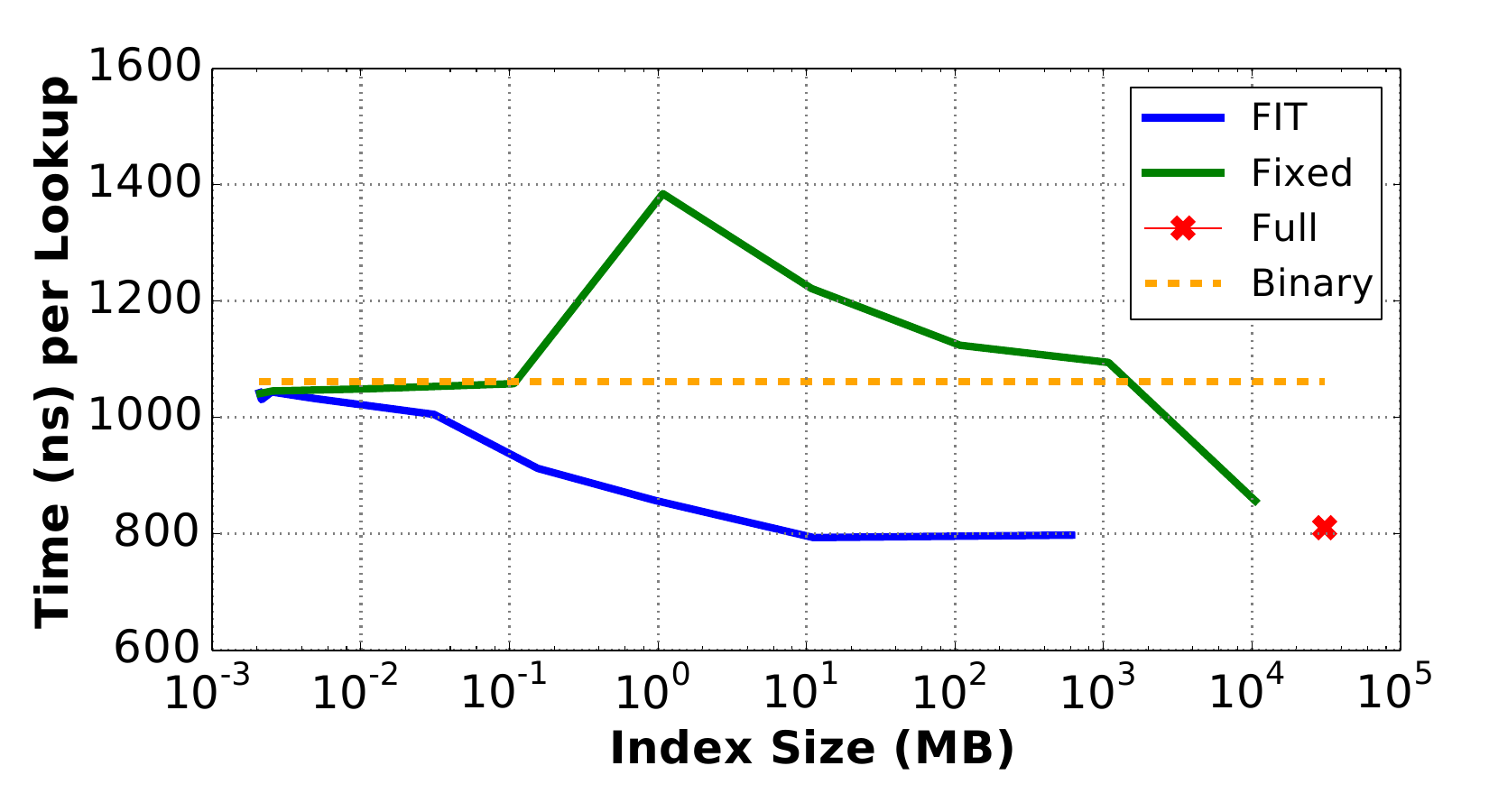}
    \vspace{-2mm}
    \subcaption{Maps}
  \label{fig:maps_lookup}
  \end{minipage}
  \hfill
  \vspace{-3mm}
  \caption{Latency for Lookups (per thread)}
  \vspace{-3mm}
  \label{fig:lookups}
\end{figure*}

\begin{figure*}[!ht]
  \begin{minipage}[b]{.32\textwidth}
    \centering
    \includegraphics[width=.9\linewidth]{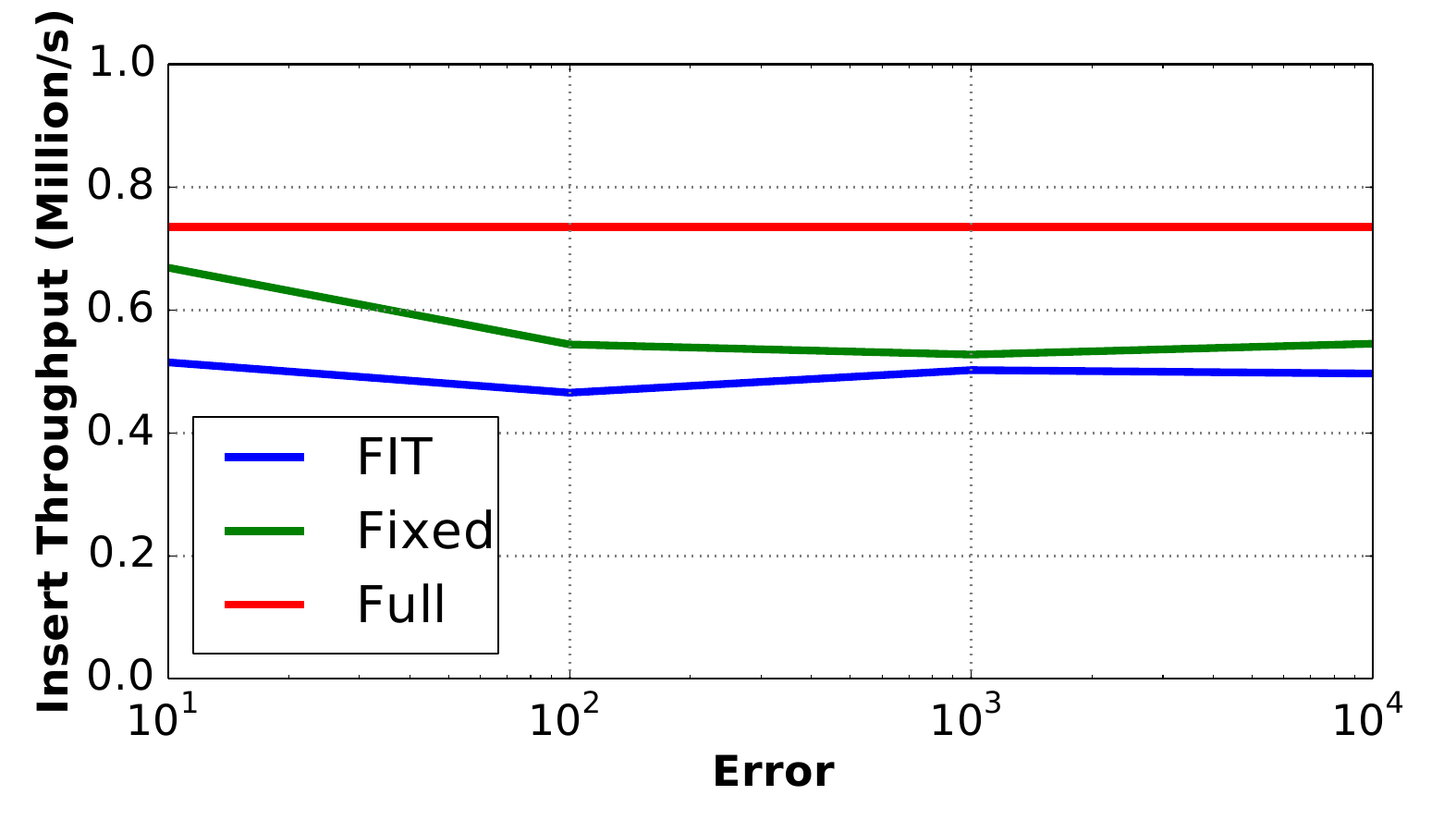}
    \vspace{-2mm}
    \subcaption{Weblogs}
    \label{fig:weblogs_insert}
  \end{minipage}
  \hfill
  \begin{minipage}[b]{.32\textwidth}
    \centering
    \includegraphics[width=.9\linewidth]{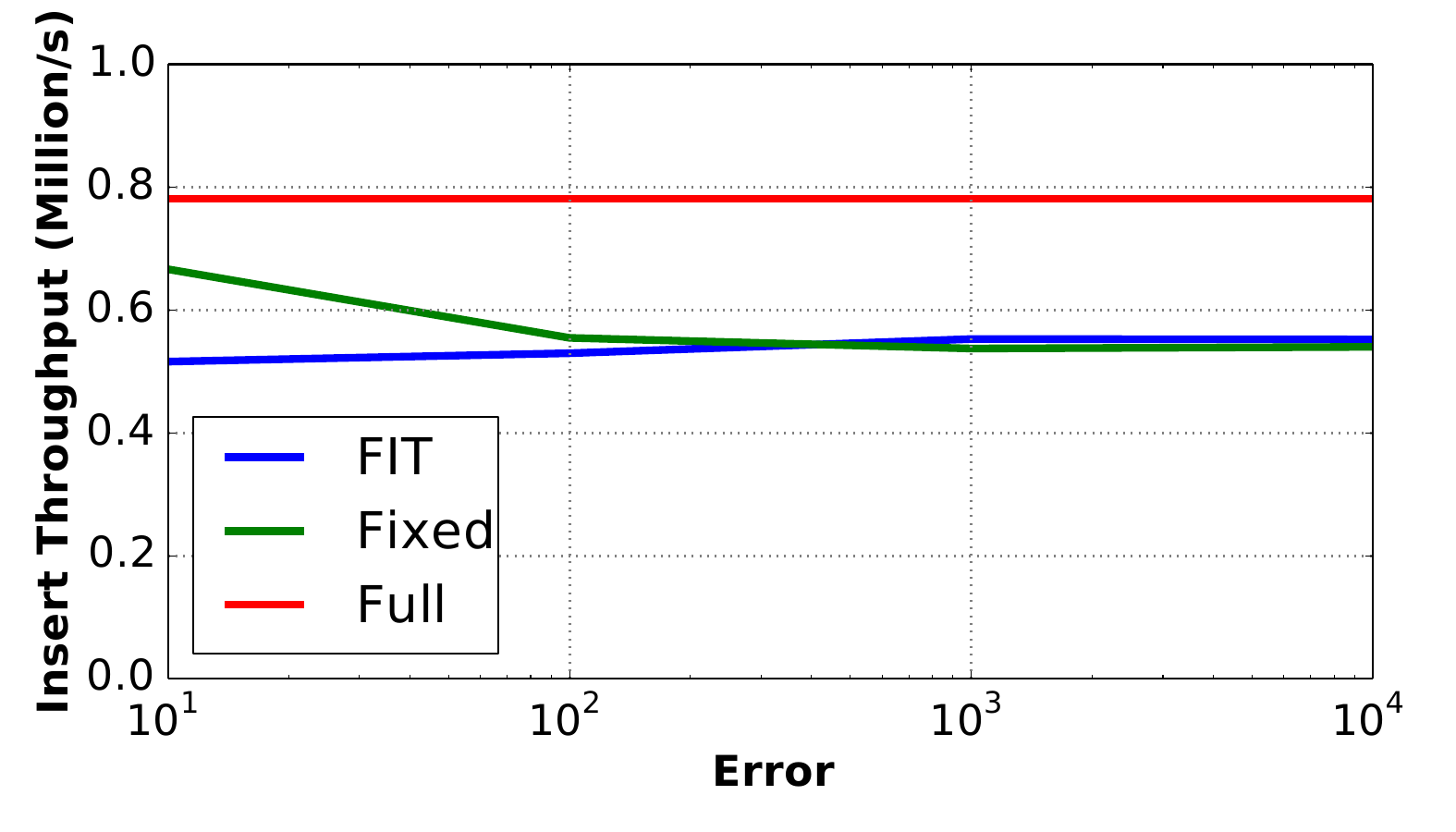}
    \vspace{-2mm}
    \subcaption{IoT}
    \label{fig:iot_insert}
  \end{minipage}
  \hfill
  \begin{minipage}[b]{.32\textwidth}
    \centering
    \includegraphics[width=.9\linewidth]{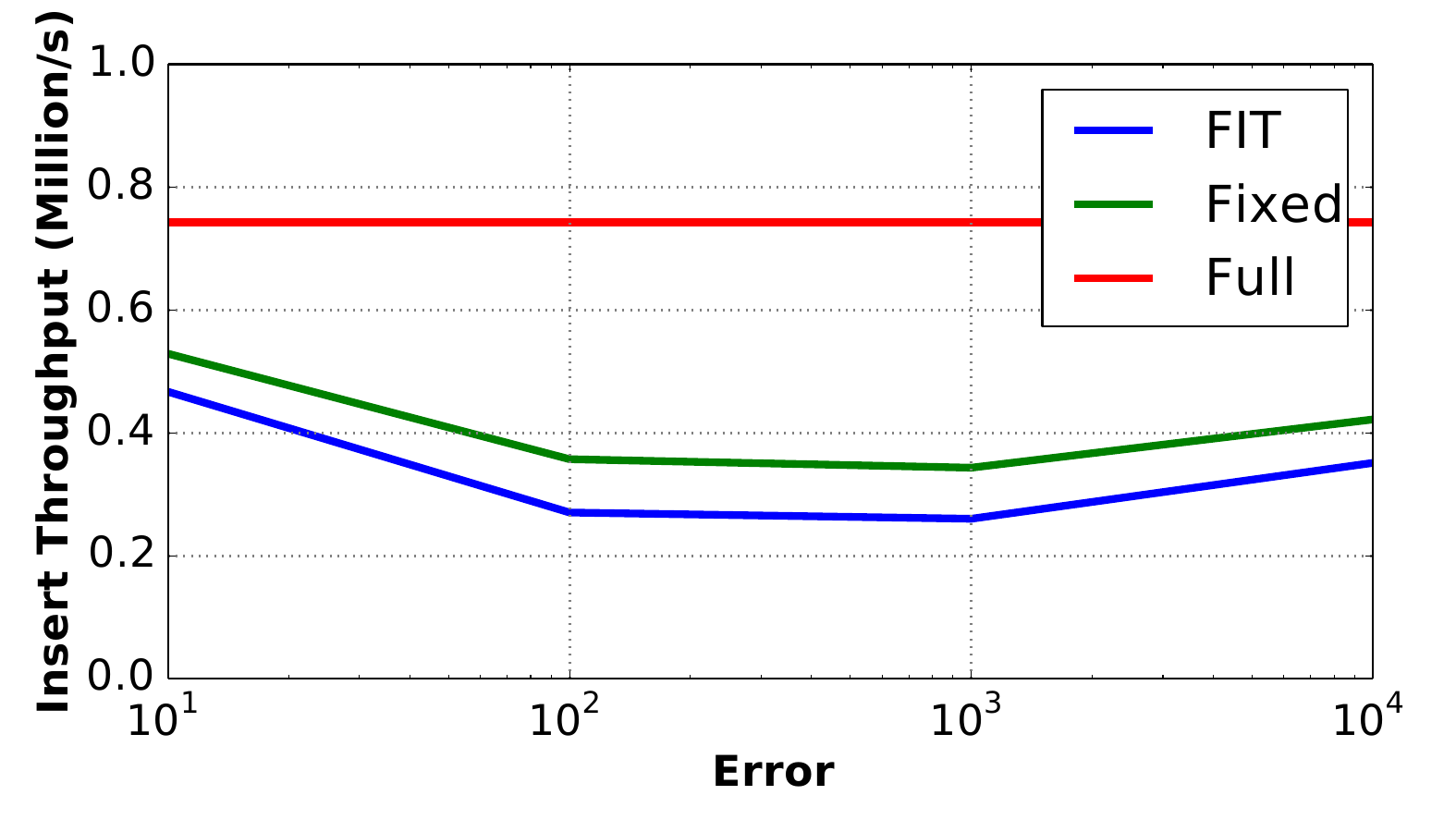}
    \vspace{-2mm}
    \subcaption{Maps}
  \label{fig:maps_insert}
  \end{minipage}
  \hfill
  \vspace{-3mm}

  \caption{Throughput for Inserts (per thread)}
  \label{fig:insert}
  \vspace{-3mm}
\end{figure*}

\begin{figure*}[!ht]
  \begin{minipage}[b]{.32\textwidth}
    \centering
    \includegraphics[width=.9\linewidth]{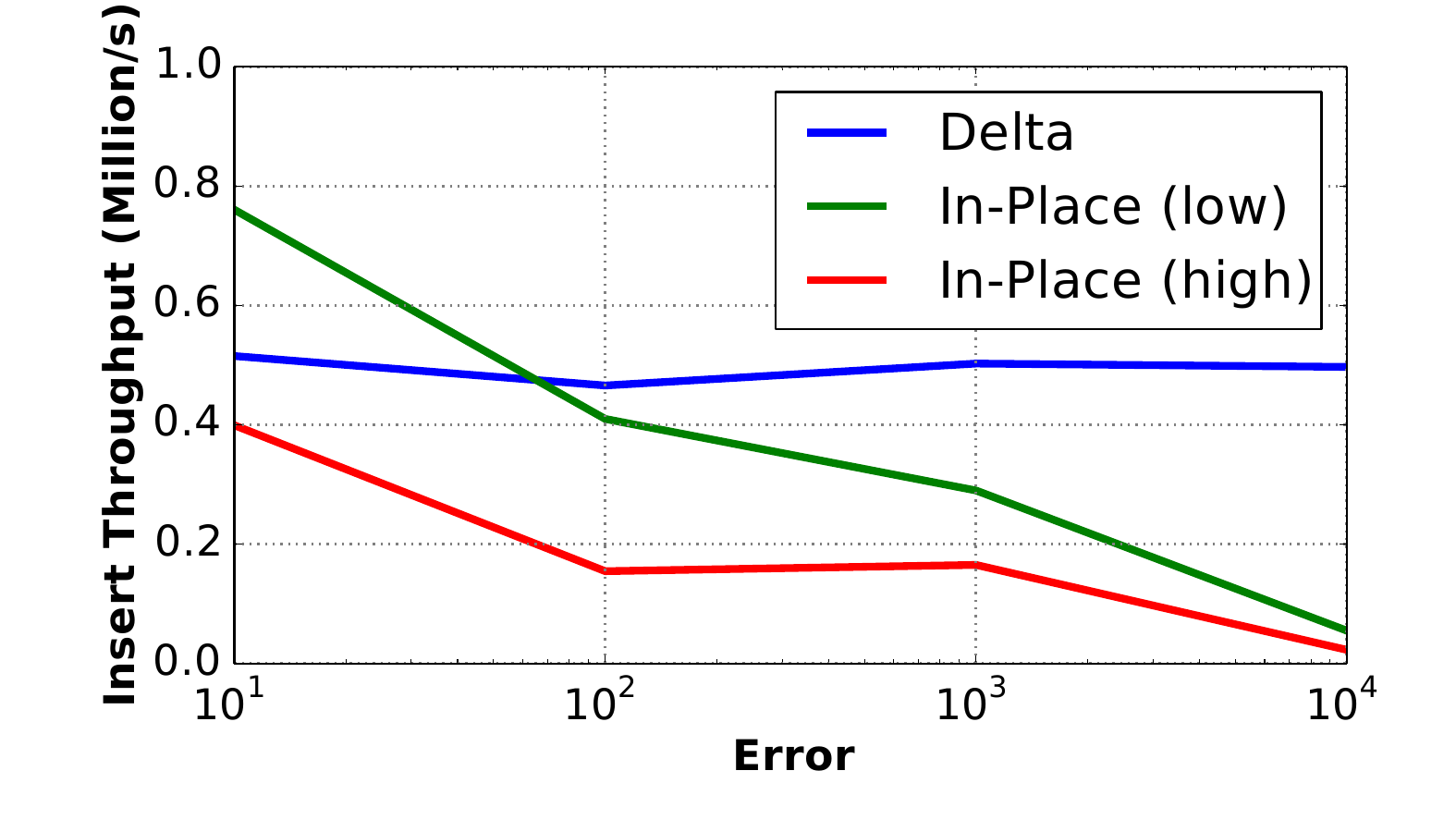}
    \vspace{-2mm}
    \subcaption{Weblogs}
    \label{fig:weblogs_insert_micro}
  \end{minipage}
  \hfill
  \begin{minipage}[b]{.32\textwidth}
    \centering
    \includegraphics[width=.9\linewidth]{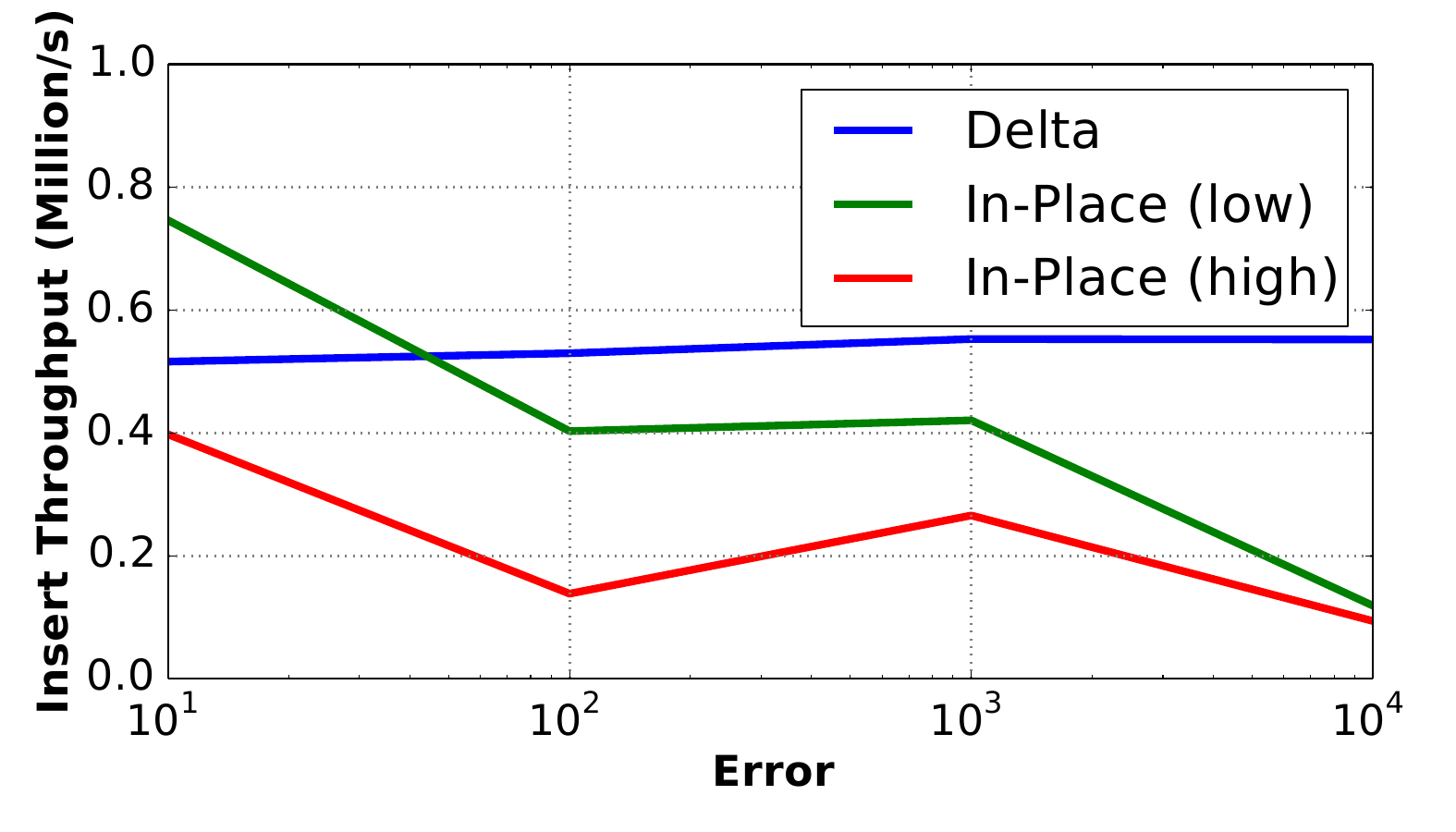}
    \vspace{-2mm}
    \subcaption{IoT}
    \label{fig:iot_insert_micro}
  \end{minipage}
  \hfill
  \begin{minipage}[b]{.32\textwidth}
    \centering
    \includegraphics[width=.9\linewidth]{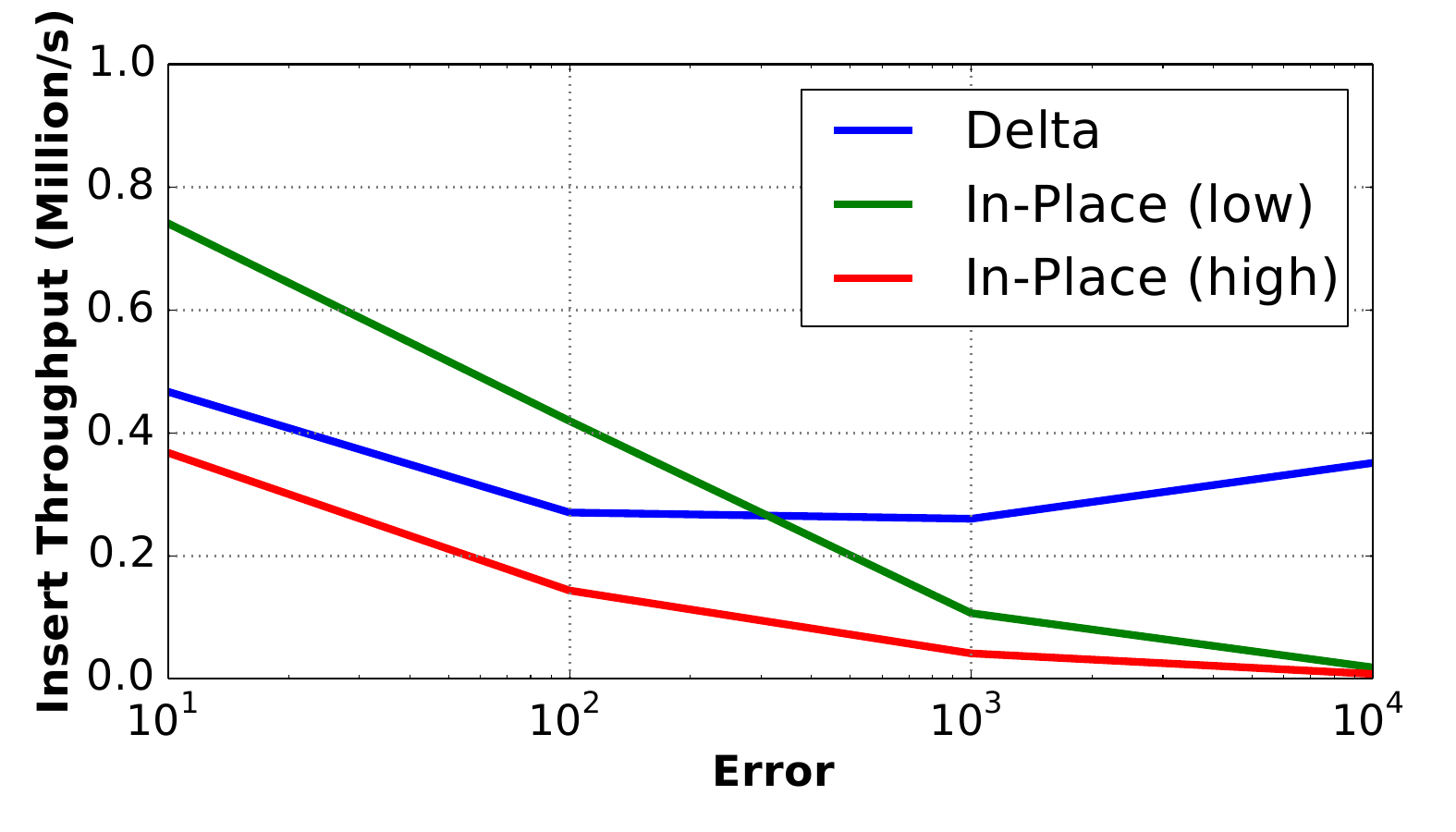}
    \vspace{-2mm}
    \subcaption{Maps}
  \label{fig:maps_insert_micro}
  \end{minipage}
  \hfill
    \vspace{-3mm}
  \caption{Insertion Strategy Microbenchmark}
  \label{fig:insert_micro}
  \vspace{0mm}
\end{figure*}

This section evaluates \system{} and the presented techniques.
Overall, we see that \system{} achieves comparable performance to a full index as well as indexes that use fixed-size paging while using orders of magnitude less space.

First, in Section \ref{sec:experiments:overall}, we compare the overall performance of \system{}, measuring its lookup and insert performance for both clustered and non-clustered indexes using a variety of real-world datasets.
Next, we compare the two proposed insert strategies in Section~\ref{sec:experiments:inplacedelta}.
Then, in Section~\ref{sec:experiments:index_construction}, we measure the construction cost of \system{} and in Section~\ref{sec:experiments:other_indexes} we show how \system{} can leverage other internal index structures.
Section~\ref{sec:experiments:scalability} shows the scalability of our index for different dataset sizes.
Finally, Section~\ref{sec:experiments:worst_case} shows how \system{} performs for an adversarial synthetically generated dataset (i.e., worst-case data distribution) and Section~\ref{sec:experiments:cost_model} evaluates our cost model.

Appendix~\ref{sec:appendix:further_eval} includes additional experiments that compare \system{} to Correlation Maps~\cite{correlation_maps}, show results for range queries, measure the throughput for various buffer sizes, and breakdown the lookup performance of \system{}.

We conducted all experiments on a single server with an Intel E5-2660 CPU (2.2GHz, 10 cores, 25MB L3 cache) and 256GB RAM and all index and table data was held in memory.

\subsection{Exp. 1: Overall Performance}
\label{sec:experiments:overall}
In the following, we evaluate the overall lookup and insert performance of \system{}. 
For these comparisons, we benchmark \system{} against both a full index (i.e., a dense index) as well as an index that uses fixed-size pages (i.e., a sparse index).
A full index can be seen as best case baseline for lookup performance and thus gives us an interesting reference point.

For the two baselines (full and fixed-size paging), we use a popular B+ tree implementation (STX-tree~\cite{stxtree}  v0.9) since our \system{} prototype also uses this tree to index the variable sized segments.
Importantly, as we show in Section~\ref{sec:experiments:other_indexes}, any other tree implementation can also serve as the organization layer.

\subsubsection{Datasets}
\label{sub:datasets}
Since performance of our index depends on the distribution of elements in a given dataset, we evaluate \system{} on real-world datasets with different distributions.
Later, in Section~\ref{sec:experiments:worst_case}, we show that our techniques are still valuable using a synthetically generated worst-case dataset. 
%
%
For our evaluation, we use three different real-world datasets, each with very different underlying data distributions: 
(1) Weblogs~\cite{mgbench}, 
(2) IoT~\cite{mgbench}, and 
(3) Maps~\cite{OSMdata}.

The Weblogs dataset contains $\approx 715M$ log entries for every web request to the CS department at a university over the past $14$ years.
This dataset contains subtle trends, such as the fact that more requests occur during certain times (e.g., school year vs. summer, day vs. night).
On the other hand, the IoT dataset contains $\approx 5M$ readings from around $100$ different IoT sensors (e.g., door, motion, power) installed throughout an academic building at a university.
Since these sensors generally reflect human activity, this dataset has interesting patterns, such as the fact there is more activity during certain hours because classes are in session.
For each of these datasets, we create a \emph{clustered} \system{} using the timestamp attribute (e.g., the time at which a resource was requested).
Finally, the Maps dataset contains the longitude of $\approx 2B$ features (e.g., museums, coffee shops) across the world.
Unsurprisingly, the longitude of locations is relatively linear and does not contain many periodic trends. 
Unlike the previous two datasets, we create a \emph{non-clustered} \system{} over the longitude attribute of this dataset.

For our approach, the most important aspect of a dataset that impacts \system{}'s performance is the data's periodicity.
For now, think of the periodicity as the distance between two ``bumps'' in a stepwise function that maps keys to storage locations as shown in Figure~\ref{fig:worst_case_data} (blue line).
If the specified error of \system{} is larger than the periodicity (green line), the segmentation results in a single segment.
However, if the error is smaller than the periodicity (red line), we need multiple segments to approximate the data distribution.

Therefore, we define a \textit{non-linearity ratio} to show the periodicity of a dataset.
To compute this ratio, we first calculate the number of segments required to cover the dataset for a given error threshold. 
We then normalize this result by the number of segments required for a dataset of the same size with periodicity equal to the error (which is the worst case, or the most ``non-linear'' in that scale).

To show that all datasets contain a distinct periodicity pattern, Figure~\ref{fig:linearity} plots the non-linearity ratio of each of dataset.
The IoT dataset has a very significant bump, signifying that there is very strong periodicity the scale of $10^4$, likely due to patterns that follow human behavior (e.g., day/night hours).
Weblogs has multiple bumps which are likely correlated to different periodic patterns (e.g., daily, weekly, and yearly patterns).
The Maps dataset, unlike the others, is linear at small scales (but has stronger periodicity at larger scales).

\subsubsection{Lookups}
The first series of benchmarks show how \system{} compares to (1) a full index (i.e., dense), (2) an index that uses fixed-size paging (i.e., sparse), and (3) binary search over the entire dataset.
We include binary search since it represents the most extreme case where the error is equal to the data size (i.e., our segmentation creates one segment).
For the Weblog and IoT dataset, we created a clustered index using the timestamp attribute which is the primary key of these datasets.
We created a non-clustered (i.e., secondary) index over the longitude attribute of the Maps dataset, which is not unique.


The results (Figure~\ref{fig:lookups}) show the lookup latency for various sizes of the index for the Weblogs (scaled to 1.5B records), IoT (scaled to 1.5B records), and Maps (not scaled, ~2B records) datasets.
More specifically, since the size of both \system{} and the fixed-size paging baseline can be varied (i.e., \system{}'s error term and the page size influence the number of leaf-level entries), we show the performance of each of these approaches for various index sizes.
Note that the size of a full index cannot be varied and is therefore a single point in the plot.
Additionally, since binary search does not have any additional storage requirement, its size is zero but is visualized as a dotted line.

In general, the results show that \system{} always performs better than an index that uses fixed-size paging.
Most importantly, however,  \system{} offers significant space savings compared to both fixed-size paging and a full index.
For example, in the Maps dataset, \system{} is able to match the performance of a full index using only $609MB$ of memory, while a full index consumes over $30GB$ of space.
Moreover, compared to a tree that uses fixed-size paging, a \system{} which consumes only $1MB$ of memory is able to match the performance of a fixed-size index which consumes over $10GB$ of memory, offering a space savings of four orders of magnitude.

As expected, for very small index sizes (i.e., very large pages or a high error threshold), both \system{} and fixed-size paging mimic the performance of binary search since there are only a few pages that are searched using binary search.
On the other hand, as the index grows, the performance of both fixed-size paging as well as \system{} converge to that of a full index due to the fact that pages contain very few elements.
Note that the spike in the graph for the fixed-size index is due to the fact that the index begins to fall out of the CPU's L2 cache.


Finally, as expected, the data distribution impacts the performance of \system{}.
More specifically, we can see that \system{} is able to more quickly match the performance of a full tree with the Maps dataset, compared to the Weblogs and IoT datasets.
This is due to the fact that the Maps dataset is relatively linear, when compared to the Weblogs and IoT datasets (shown in Figure~\ref{fig:linearity}).


\subsubsection{Inserts}
\label{sec:experiments:inserts}
Next, we compare the insert performance of \system{} to both a full index, as well as an index that uses fixed-size paging as previously described.
To insert new items, \system{} uses the previously described delta insert technique since it provides the best overall performance, which we show in next section where we perform an in-depth comparison of the delta and the in-place insert strategies.

More specifically, to ensure a fair comparison and that \system{} is not unfairly write-optimized, we set the size of the delta buffer to half of the specified error (i.e, for an error threshold of $100$, the underlying data is segmented using an error of $50$ and each segment's buffer has a maximum size of $50$ elements).
Similarly, for the index with fixed-size pages, the page size is given by the half of the error threshold we used for \system{} and the same amount (i.e., half of the error used in \system{}) is used as the buffer size.
As usual, once the buffer is full, the page is split into two pages.
The results, shown in Figure~\ref{fig:insert}, compare the throughput of each index after building the index over the first half of the data and inserting the second half for various error thresholds using shuffled versions of the previously described datasets.
As shown, \system{} is able to achieve insert performance that is, in general, comparable to an index that uses fixed-size paging.
Unsurprisingly, a full B+ tree is able to handle a higher write load than either a \system{} or an index that uses fixed-size paging since both need to periodically split pages that become full.
Additionally, \system{} needs to execute the segmentation algorithm, explaining the performance gap between \system{} and fixed-size paging.


\begin{figure}
  \vspace{-2mm}
  \begin{minipage}[b]{.51\columnwidth}
    \centering
    \includegraphics[width=0.99\linewidth]{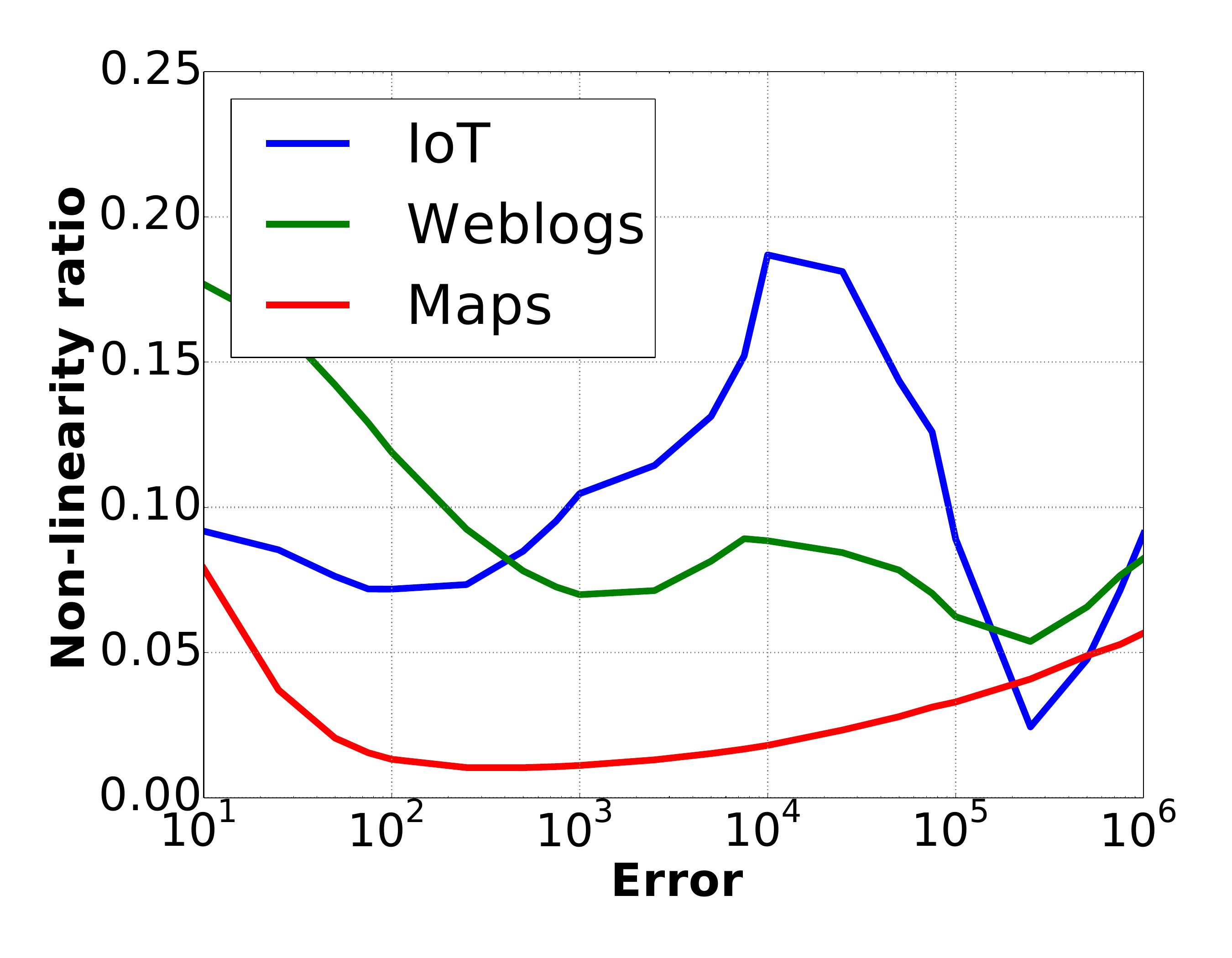}
    \vspace{-7mm}
    \caption{Non-linearity}
    \label{fig:linearity}
  \end{minipage}
  \hfill
  \begin{minipage}[b]{.48\columnwidth}
    \centering
    \includegraphics[width=\linewidth]{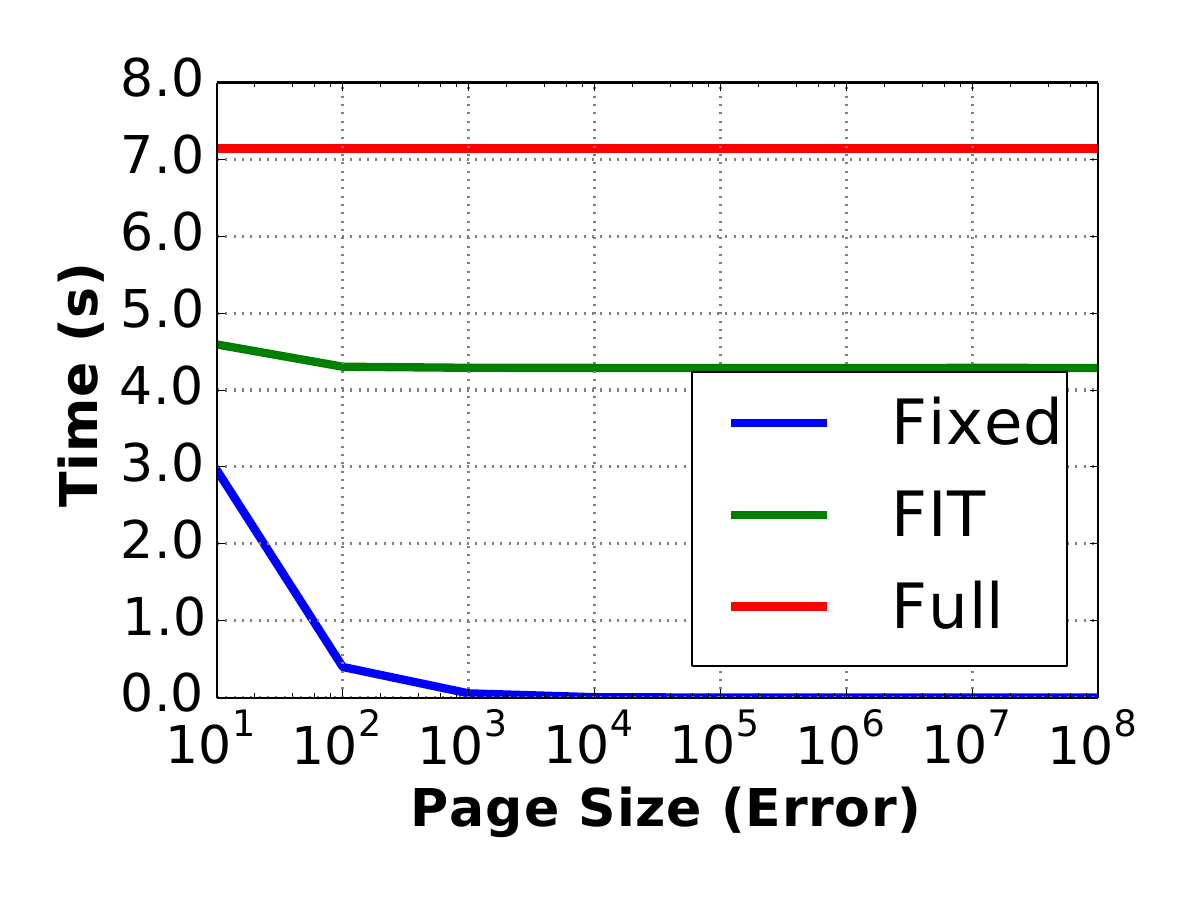}
    \vspace{-7mm}
    \caption{Build Time}
    \label{fig:weblogs_construction}
  \end{minipage}
  \hfill
  \vspace{-10mm}
\end{figure}

Interestingly, \system{} is faster than fixed-size paging in some cases since the error determines the number of segments in the tree.
For a large error, there typically are fewer segments generated, which reduces the number of times that \system{} needs to merge the buffer with the segment's data and execute the segmentation algorithm.

\subsection{Exp. 2: In-place vs. Delta Inserts}
\label{sec:experiments:inplacedelta}
In the following, we compare the two insert strategies described in Section~\ref{sec:inserts} and show how they perform for various datasets, fill factors, and error thresholds.

Figure~\ref{fig:insert_micro} shows the insert throughput for both the delta and in-place insert strategies for each of the previously described datasets.
As mentioned, for in-place inserts, the amount of free space reserved at the beginning and end of page ($\varepsilon$) can be tuned based on the read/write characteristics of the workload.
Therefore, we now show results for both a low (i.e., $\varepsilon$ is 25\% of the error) and high (i.e., $\varepsilon$ is 75\% of the error) fill factor.
For the delta insert results, we use the same setup as the previously described insert experiment (i.e., we set the size of the delta buffer to half of the specified error).

As shown, the delta insert strategy generally offers the highest insert throughput for error thresholds higher than $100$.
This is due to the fact that the in-place insert strategy must move many data items when inserting a new item since segments created with a higher error threshold contain more keys.
However, for low error thresholds, the in-place insert strategy outperforms the delta strategy, since there are significantly fewer data items that need to be shifted.

The fill factor impacts (1) how many data items are in a given segment and must be copied when inserting a new element, and (2) how often a segment fills up and needs to be re-segmented.
As shown, in general, the in-place strategy with a low fill factor offers the highest insert performance.

\begin{figure}
  \vspace{-4mm}
  \begin{minipage}[b]{.52\columnwidth}
    \centering
    \includegraphics[width=\linewidth]{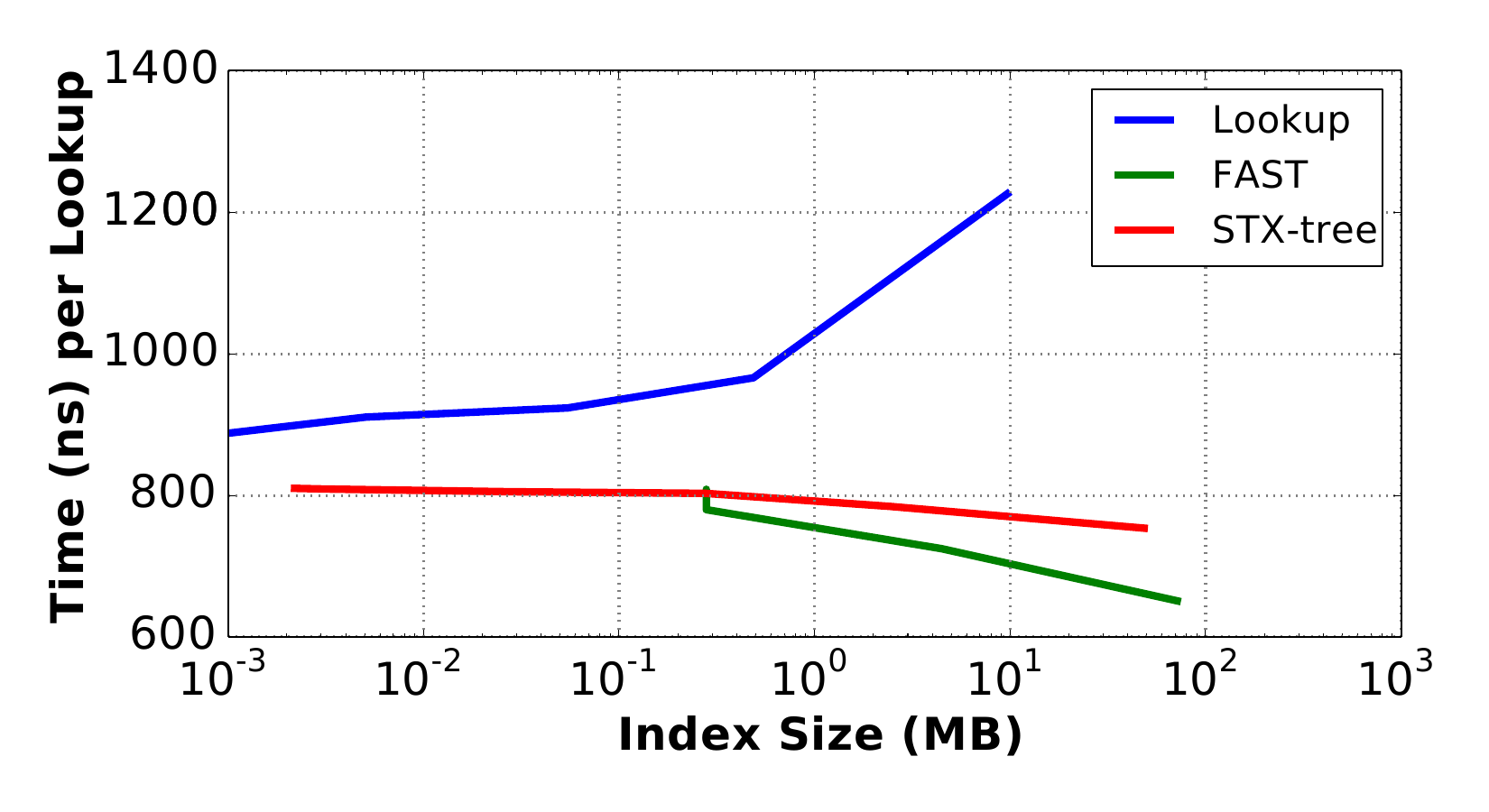}
    \vspace{-3mm}
    \caption{Other Indexes}
    \label{fig:generalizability}  \end{minipage}
  \hfill
  \begin{minipage}[b]{.47\columnwidth}
    \centering
    \includegraphics[width=0.97\linewidth]{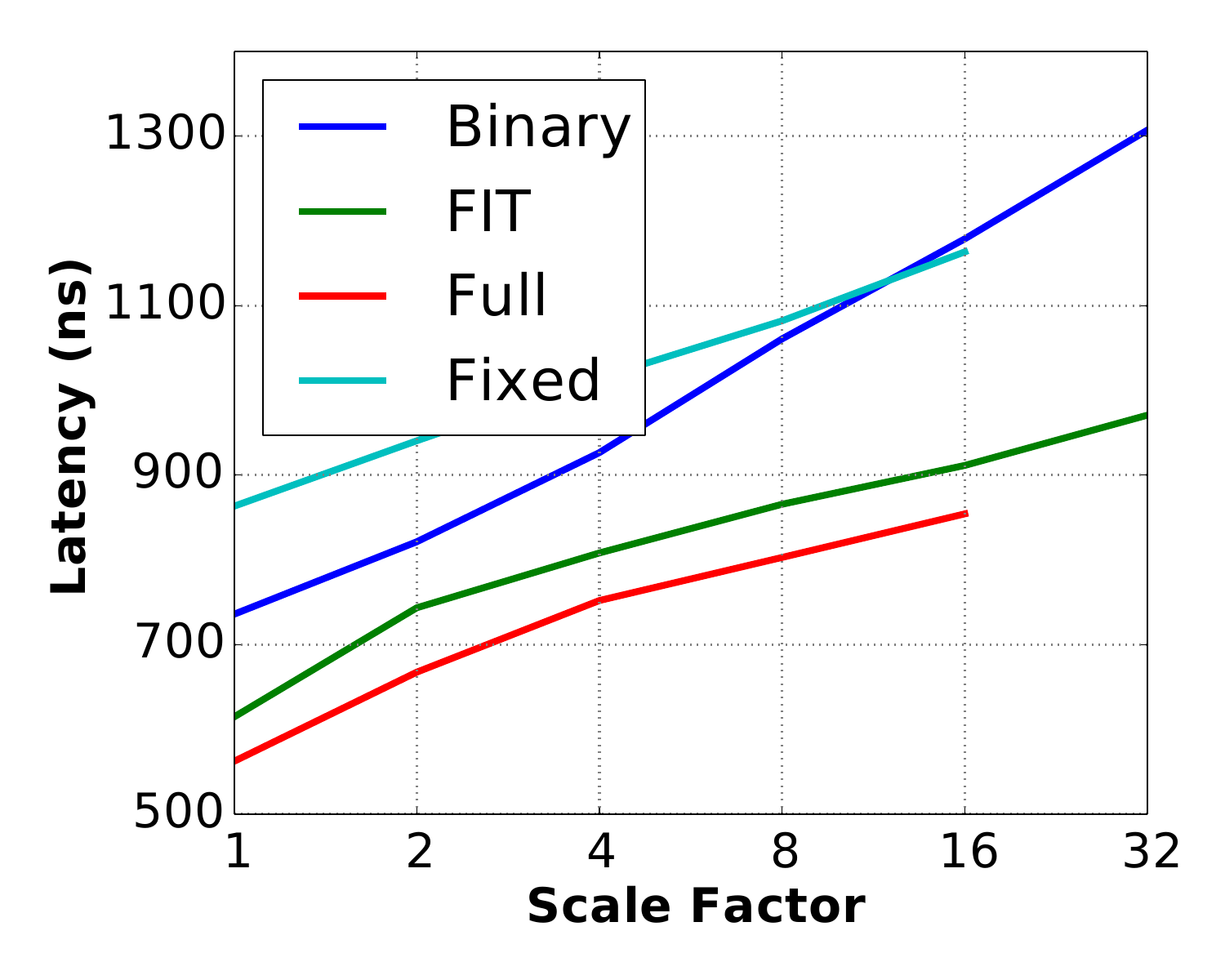}
    \caption{Scalability}
    \label{fig:weblogs_scalability}
  \end{minipage}
  \hfill
  \vspace{-8mm}
\end{figure}

\subsection{Exp 3: Index Construction}
\label{sec:experiments:index_construction}

In the following, we quantify the cost to construct a \system{}.
More specifically, we measure the amount of time required to bulk load a \system{} and compare it to the time required to construct a B+ tree that uses fixed-size pages as well as a full index.
The results, shown in Figure~\ref{fig:weblogs_construction}, plot the runtime for each approach for various page sizes (Fixed B+ tree) or error thresholds (\system{}) using the Weblogs dataset.

Since a full index must insert every element into the tree, it takes a constant amount of time.
However, the time required to bulk load a B+ tree that uses fixed-size pages decreases as the page size increases since only the page boundaries (e.g., every 100th element) must be accessed from the underlying data.
As previously mentioned, building a \system{} requires first segmenting the data, and then inserting each segment into the underlying tree structure.
Unsurprisingly, since the segmentation algorithm must examine every element in the data, a \system{} incurs a constant amount of extra overhead ($\approx$ 4.2 seconds in the shown experiment) compared to a B+ tree that uses fixed-size pages.

Interestingly, however, this constant amount extra overhead can be avoided in some cases.
For example, when initially loading the data into the system, it is possible to execute the segmentation algorithm in a streaming fashion.
In this case, it is possible that building a \system{} is actually less expensive than building a B+ tree with fixed-size pages since the resulting segmentation algorithm leverages trends in the data to produce fewer lead-level entries.

\subsection{Exp 4: Other Indexes}
\label{sec:experiments:other_indexes}
As mentioned, \system{} internally uses STX-tree~\cite{stxtree} to organize the variable sized pages generated through our segmentation process (Section~\ref{sec:bulk_loading}).
However, depending on the workload characteristics (e.g., read/write ratio), other internal data structures could also be used to index segments to provide additional performance improvements. 


Therefore, to show that our techniques are generalizable, we compare using an STX-tree to (1) FAST~\cite{FAST}, a highly optimized index structure for read-only workloads and (2) a simple lookup table that simply stores the variable sized pages in sorted order.
Figure~\ref{fig:generalizability} shows the results for various index sizes (i.e., error thresholds) over a subset of the Weblogs dataset (since FAST requires a power of two elements).
As shown, a lookup table provides superior space savings (since there are no internal separator nodes) but with a higher lookup latency.
On the other hand, a \system{} that internally uses the highly optimized FAST index can provide faster lookups but often consumes more space.
Therefore, a \system{} is able to leverage alternative indexes that may be more performant depending on the workload characteristics (e.g., read-only).

\begin{figure}
    \vspace{-4mm}
  \begin{minipage}[b]{.49\columnwidth}
    \centering
    \includegraphics[width=\linewidth]{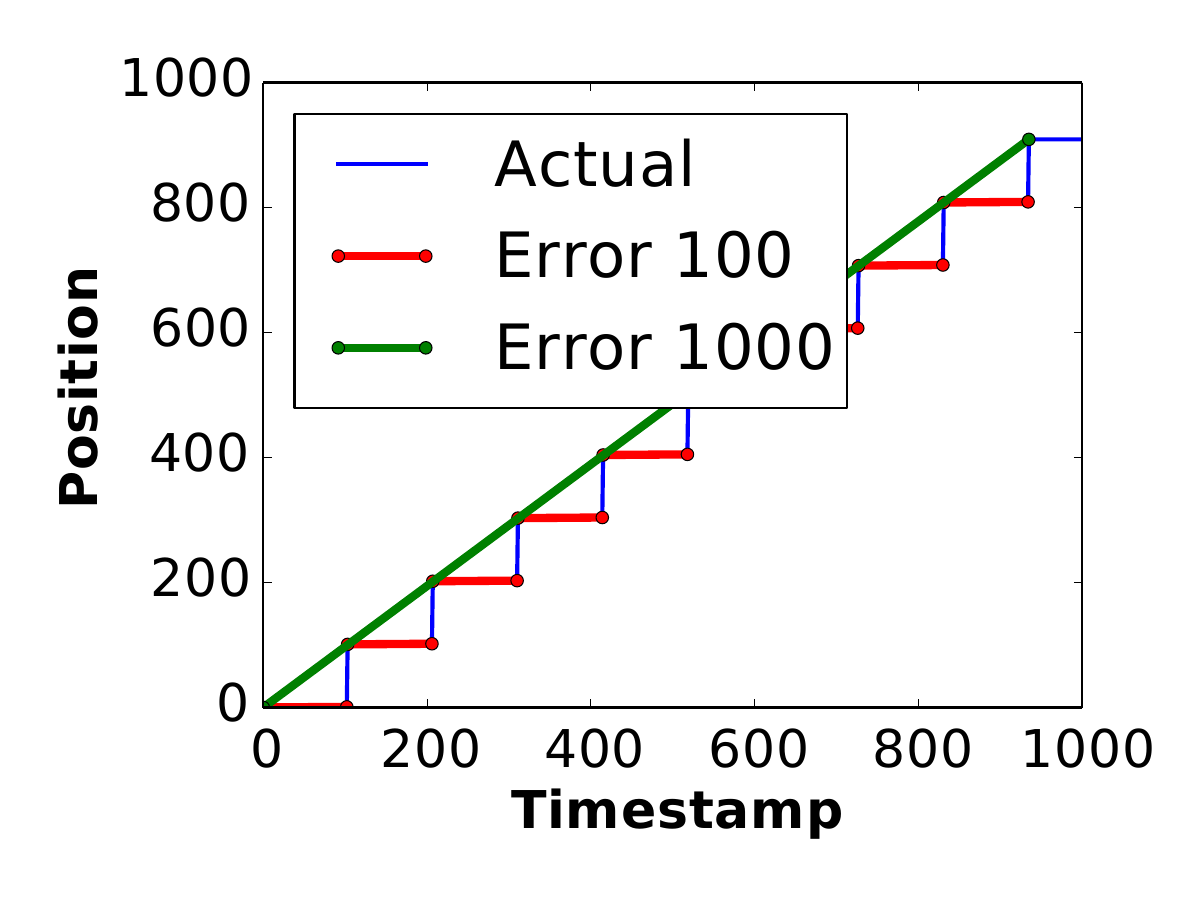}
    \subcaption{Worst Case Data}
    \label{fig:worst_case_data}
  \end{minipage}
  \hfill
  \begin{minipage}[b]{.49\columnwidth}
    \centering
    \includegraphics[width=\linewidth]{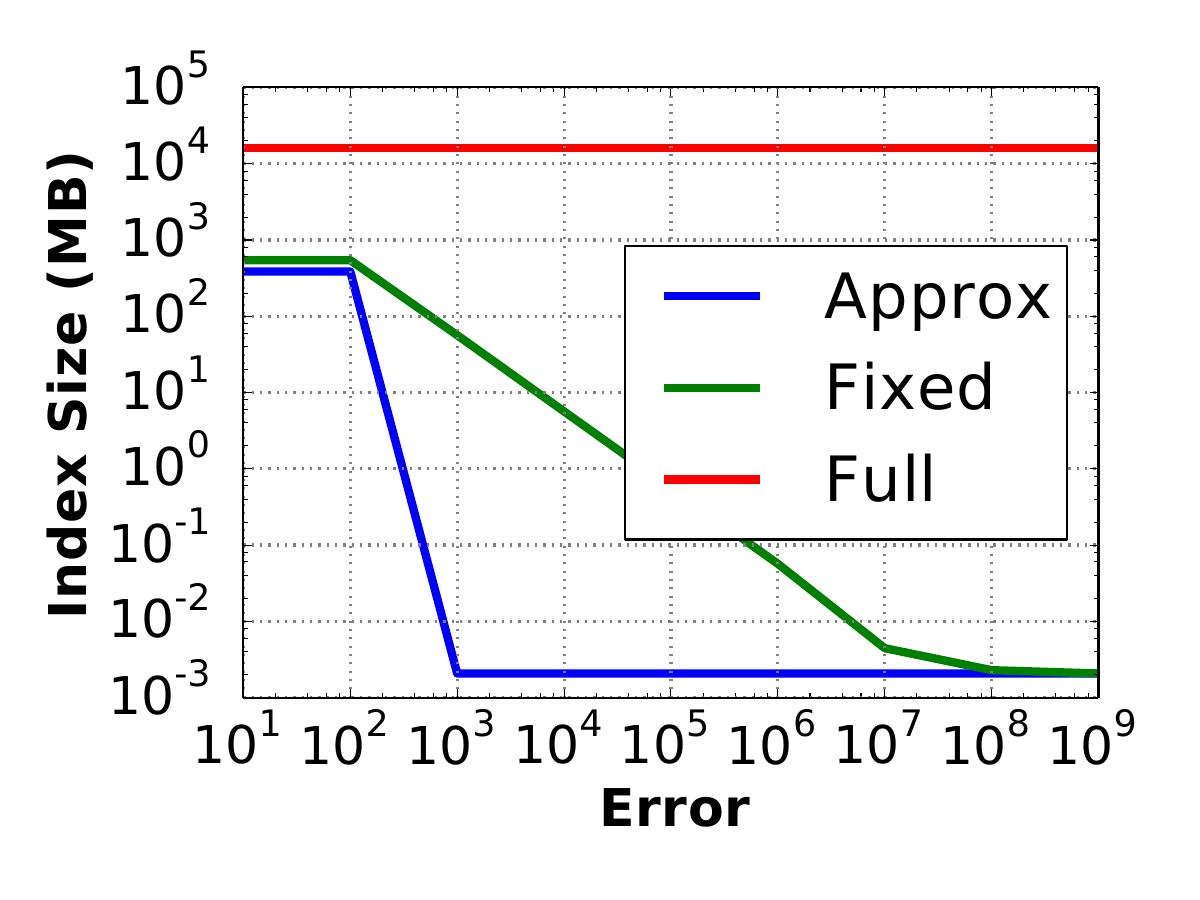}
    \subcaption{Worst Case Index Size}
    \label{fig:worst_case_size}
  \end{minipage}
  \hfill
  \caption{Worst Case Analysis}
  \vspace{-4mm}
  \label{fig:worst_casee}
\end{figure}

\subsection{Exp. 5: Data Size Scalability}
\label{sec:experiments:scalability}

To evaluate how \system{} performs for various dataset sizes, we measure the lookup latency for the Weblogs dataset for various scale factors where both the error threshold and fixed page size are set to $100$, which is optimal for this dataset.
Since the performance of our index depends on the underlying data distribution, we scale the dataset while maintaining the underlying trends.
We omit the result for all other datasets here (IoT and Maps) since they follow similar trends.

Figure~\ref{fig:weblogs_scalability} shows that the indexes (i.e., \system{}, a full index, fixed-size paging) scale better than binary search due to the better theoretical asymptotic runtime ($log_b(n)$ vs. $log_2(n)$) and cache performance.
Additionally, \system{}'s performance over various dataset sizes closely follows that of a full index which offers the best performance, demonstrating that our techniques offer valuable space savings without sacrificing performance.
Most importantly, neither a full index nor an index that uses fixed-size paging could scale to a scale factor of $32$ since the size of the index exceeded the amount of available memory.
This again shows that \system{} is able to offer valuable space savings.

\subsection{Exp 6: Worst Case Data}
\label{sec:experiments:worst_case}

Since the data distribution influences the performance of \system{}, we synthetically generated data to illustrate how our index performs with data that represents a worst-case.
We define the worst case as as a dataset which maximizes the number of segments given a specific error, described further in Section~\ref{sec:bulk_loading:segment_defi}.
To do this, we generate data using a step function with a fixed step size of $100$, as shown in Figure~\ref{fig:worst_case_data}.
Since the step size is fixed, an error threshold less than the step size yield a single segment per step.
However, given an error threshold larger than the step size, our segmentation algorithm will be able to use a single segment to represent the entire dataset.

Figure~\ref{fig:worst_case_size} shows the performance for various sizes of each index built over this worst case dataset.
As shown, for error thresholds of less than $100$, the size of a \system{} is the same as a fixed-size index but still smaller than a full index.
This is due to the fact that for the error thresholds less than the step size, \system{} creates segments of size 100 (step size), resulting in a large number of nodes in the tree.
On the other hand, for an error threshold that is larger than $100$, \system{} is able to represent the step dataset with only a single segment, dramatically reducing the index's size.
As shown, importantly, a \system{} will not contain more leaf-level entries than an index that uses fixed-size paging, as discussed in Section~\ref{sec:alg-analysis}.

\begin{figure}
    \vspace{-4mm}
  \begin{minipage}[b]{.49\columnwidth}
    \centering
    \includegraphics[width=\linewidth]{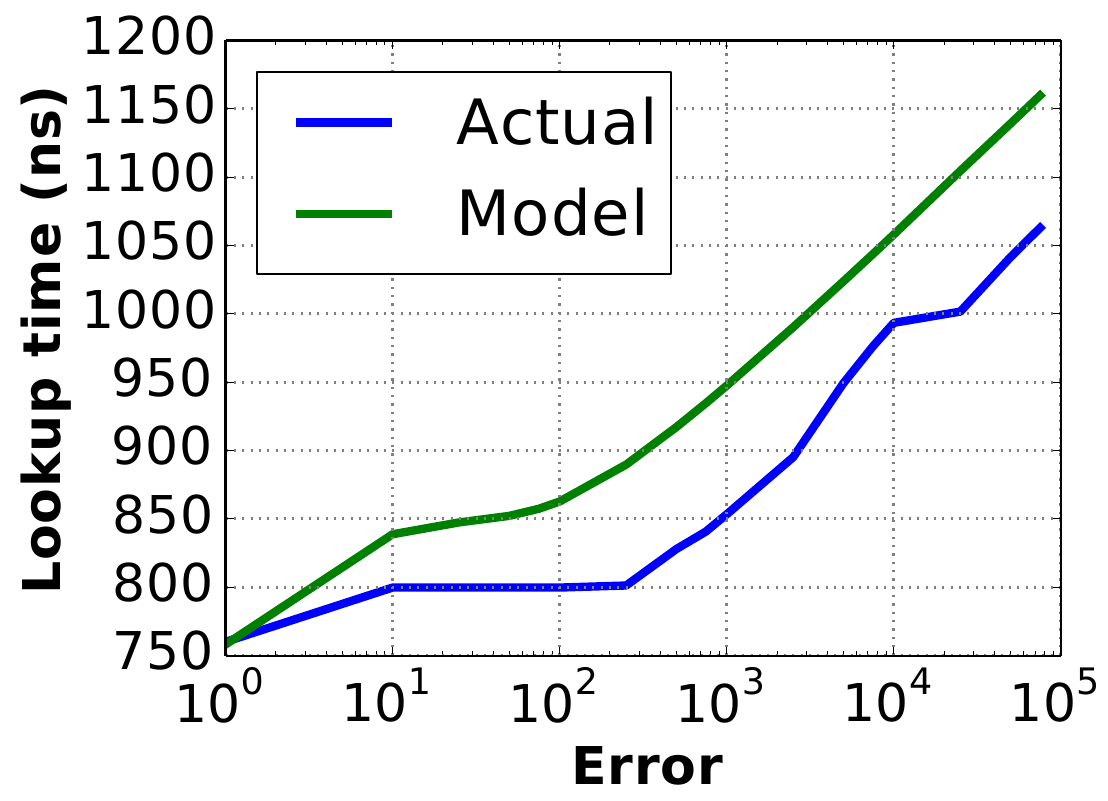}
    \subcaption{Latency}
    \label{fig:cost_model:latency}
  \end{minipage}
  \hfill
  \begin{minipage}[b]{.49\columnwidth}
    \centering
    \includegraphics[width=\linewidth]{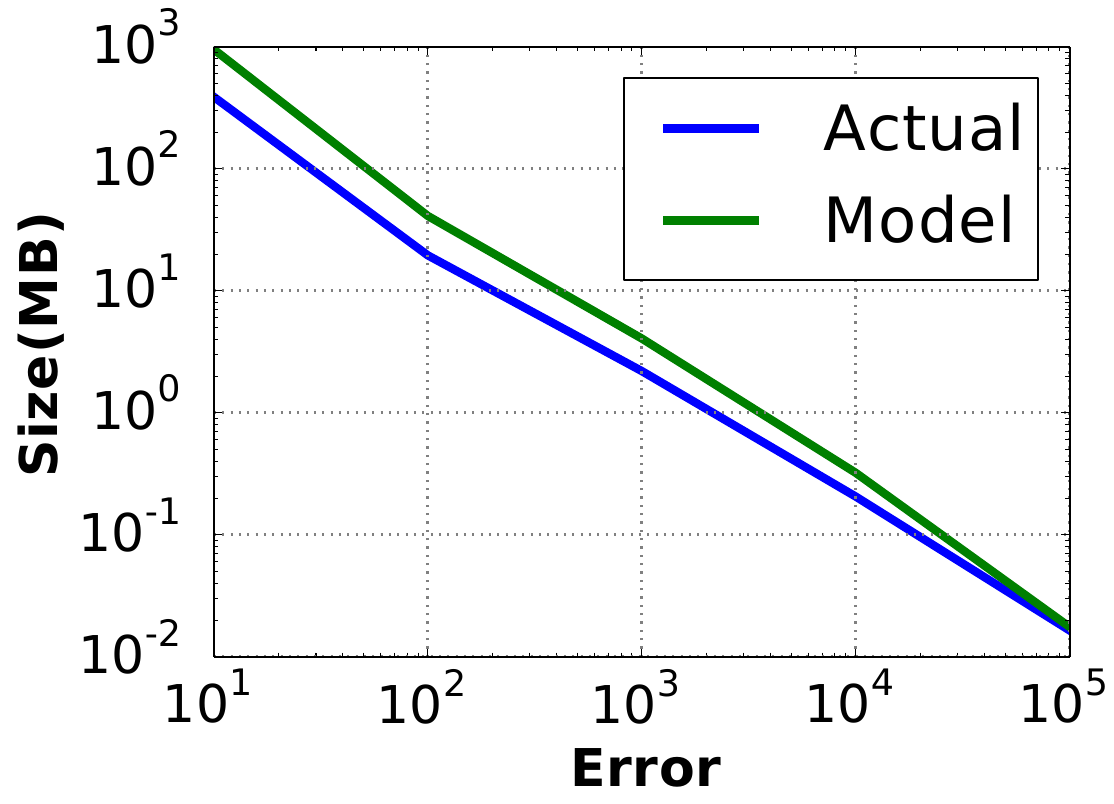}
    \subcaption{Size}
    \label{fig:cost_model:size}
  \end{minipage}
  \hfill
  \vspace{-1mm}
  \caption{Cost Model Accuracy}
  \vspace{-5mm}
  \label{fig:cost_model}
\end{figure}

\subsection{Exp. 7: Accuracy of Cost Model}
\label{sec:experiments:cost_model}
Since, as previously described, the error threshold influences both the latency as well as space consumption of our index, the cost model presented in Section~\ref{sec:cost_model} aims to guide a DBA when determining what error threshold to use for a \system{}.
More specifically, given a latency requirement (e.g., $1000ns$) or a space budget (e.g., $2GB$), our cost model automatically determines an appropriate error threshold that satisfies the given constraint.

Figure~\ref{fig:cost_model:latency} shows the estimated and actual lookup latency for various error thresholds on the Weblogs dataset using a value of $50ns$ for $c$ (the cost of a random memory access) determined through a memory benchmarking tool on the given hardware.
As shown, our latency model predicts an accurate upper bound for the actual latency of a lookup operation.
Our model slightly overestimates the latency due to the fact that it does not incorporate CPU caching effects. 
Since we overestimate the cost, we ensure that a specified latency threshold will always be observed.

To evaluate our size cost model, we show the predicted and actual size of a \system{} for various error thresholds in Figure~\ref{fig:cost_model:size}.
As shown, our model is able to accurately predict the size of an index for a given error threshold while ensuring that our estimates are pessimistic (i.e., the estimated cost is higher than the true cost).


\section{Related Work}
\label{sec:related}

The presented techniques in this paper overlap with work in different areas including (1) index compression, \linebreak (2) partial/adaptive indexes, and (3) function approximation.

\emph{Index Compression:}
Since B+ trees can often consume significant space, several index compression techniques have been proposed.
These approaches reduce the size of keys in internal nodes by applying techniques such as prefix/suffix truncation, dictionary compression, and key normalization~\cite{graefe, compressing_relations_and_indexes,rdf3x}.
Importantly, these techniques can also be applied within \system{} to further reduce the size of the underlying tree structure.

Similar to B+ tree compression, several methods have been proposed in order to more compactly represent bitmap indexes~\cite{bitmap1,bitmap2,bitmap3,bitmap4,bitmap5,bitmap_perf}.
Many of these techniques are specific to bitmap indexes, which are primarily only useful for attributes with few distinct values and not the general workloads that \system{} targets.

Correlation Maps~\cite{correlation_maps} try to leverage correlations between an unclustered attribute and a clustered attribute when an existing primary key index exists.
Our approach, on the other hand, does not assume an existing index already exists and uses variable sized paging (instead of fixed-sized buckets) that better model the underlying data.

FAST~\cite{FAST} is another more recent tree structure that organizes tree elements in a more compact and efficient representation in order to exploit modern hardware features (e.g., SIMD, cache line size) for read-heavy workloads.
Similarly, an Adaptive Radix Tree (ART)~\cite{art} leverages CPU caches for in-memory indexing.
Another idea discussed in~\cite{reducing_storage} are hybrid indexes which separate the index into hot and cold regions where cold data is stored in a compressed format.
Lastly, Log-structured Merge-trees~\cite{lsmt} are designed for mostly write intensive workloads and extensions including Monkey~\cite{monkey} balance performance and memory consumption.
Each of these techniques can be seen as orthogonal and thus also could be used by \system{} to more efficiently store the underlying tree structure as well as optimize for read-heavy workloads or hot/cold data.


Other indexing techniques have been proposed that store information about a region of the dataset, instead of the indexing individual keys. 
For example, leaf nodes in a BF-Tree~\cite{bftree} are bloom filters.
Unlike \system{}, BF-Tree does not exploit properties about the data's distribution when segmenting a dataset.
Another example are learned indexes~\cite{learnedindexes}, which aim to learn the underlying data distribution to index data items.
Unlike learned indexes, \system{} has strict error guarantees, supports insert operations, and provides a cost model to ensure predictable performance and size.

Sparse indexes like Hippo~\cite{hippo}, Block Range Indexes~\cite{postgres}, and Small Materialized Aggregates~\cite{sma} all store information about value ranges similar to the idea of segments in \system{}.
However, these techniques do not consider the underlying data distribution or bound lookup/insert latency.

Finally, several approximation techniques have been proposed in order to improve the performance of similarity search ~\cite{fast_similarity_search,string_similarity,similarity_index} (for string or multimedia data), unlike \system{} which uses approximation for compressing indexes optimized for traditional point and range queries.



\emph{Partial and Adaptive Indexes:}
Partial indexes~\cite{partialidx} aim to reduce the storage footprint of an index since they index only a subset of the data that is of interest to the user.
For example, Tail Indexes~\cite{idea,hilda_idea} store only rare data items in order to reduce the storage footprint of the overall index.
\system{}, on the other hand, supports queries over all attribute values but could be extended to index only ``important'' data ranges as well.
Furthermore, database cracking~\cite{dbcracking} is a technique that physically reorders values in a column in order to more efficiently support selection queries without needing to store secondary indexes.
Since database cracking reorganizes values based on past queries, it does not efficiently support ad-hoc queries, like \system{} can.

\emph{Function Approximation:}
The main idea of a \system{} is to approximate the data distribution using piece-wise linear functions and approximating curves using piece-wise functions is not new~\cite{ESCH196985,Braess1971,smartgrid,Liu:2008:NOM:1477069.1477485}.
The error metrics $E_2$ (integral square error) and $E_\infty$ (maximal error) for these approximations have been discussed as well as different segmentation algorithms~\cite{Pavlidis:1974:SPC:1311087.1311553, Elmeleegy:2009:OPL:1687627.1687645,neumann}.
Unlike prior work, we consider only monotonic increasing functions, $E_\infty$, and potentially disjoint linear segments.
Moreover, none of these techniques have been applied to indexing and therefore do not consider looking up or inserting data items.

More recent work~\cite{shatkay1996approximate,keogh2001online,FU2011164,simsearch,Xu:2012:AAO:2247596.2247620} specific to time series data also leverages piece-wise linear approximations to store patterns for similarity search.
While these approaches also trade-off the number of segments with the accuracy of the approximate representation, they do not aim to provide the lookup and space consumption guarantees that \system{} does, and do not have the analysis related to these guarantees.

Finally, other work~\cite{Ao:2011:EPL:2002974.2002975} leverages piece-wise linear functions to compress inverted lists by storing functions and the distances of elements from the extrapolated functions.
However, these approximations use linear regression (which minimizes $E_2$), and there are no bounds on the error.




\section{Conclusion}
\label{sec:conclusion}

In this paper, we introduced \system{}, a new index structure that incorporates a tunable error parameter to allow a DBA to balance lookup performance and space consumption of an index.
To navigate this tradeoff, we presented a cost model that determines an appropriate error parameter given either (1) a lookup latency requirement (e.g., $500ns$) or (2) a storage budget (e.g., $100MB$).
We evaluated \system{} using several real-world datasets and showed that our index can achieve comparable performance to a full index structure while reducing the storage footprint by orders of magnitude.
\section{Acknowledgements}
This research is funded in part by the NSF CAREER Awards IIS-1453171 and CNS-1452712, NSF Award IIS-1514491, Air Force YIP AWARD FA9550-15-1-0144, and the
Data Systems and AI Lab (DSAIL) at MIT, as well as gifts from Intel, Microsoft, and Google.
\clearpage

\begin{scriptsize}
\bibliographystyle{ACM-Reference-Format}
\bibliography{bib}
\end{scriptsize}

\begin{appendices}
\section{Segmentation Analysis}
In the following we provide additional information about our segmentation algorithm, \textsc{ShrinkingCone}, described in Section~\ref{sec:bulk_loading}.
First, we prove the minimum size of a segment produced by our algorithm.
Then, although efficient in practice, we show that our algorithm can be arbitrarily worse than an optimal algorithm when considering the number of segments it produces.

\subsection{\textsc{ShrinkingCone} Segment Size}
\label{prf:segment-size}

We prove the claim from Theorem \ref{thm:seg_size} regarding the size of a maximal linear segment.

\begin{proof}
Consider 3 arbitrary points $(x_1,y_1)$, $(x_2,y_2)$, $(x_3,y_3)$, where $x_1<x_2<x_3$ and $y_1<y_2<y_3$.
By definition, the linear function starts at the first point in a segment, and ends at the last point in the segment.
The linear segment is not valid if the distance on the $y$ axis ($loc$) is larger than the specified error.
Therefore, given the 3 points, a linear segment starting at $(x_1,y_1)$ and ending at $(x_3,y_3)$ is not feasible if:

\begin{equation}
y_2 - err > \frac{y_3 - y_1}{x_3 - x_1}\left( x_2 - x_1 \right) + y_1
\end{equation}

By rearranging the inequality we get:

\begin{flalign}
err &< y_2 - y_1 - \frac{y_3 - y_1}{x_3 - x_1}\left( x_2 - x_1 \right)\\ 
&= \left( y_3 - y_1 \right)\cdot \left( 1 - \frac{x_2 - x_1}{x_3 - x_1}\right) - \left( y_3 - y_2 \right) \label{eq:33}\\
&\leq \left( y_3 - y_1 \right)\cdot \left( 1 - \frac{x_2 - x_1}{x_3 - x_1}\right) - 1 \label{eq:34}
\end{flalign}

In \eqref{eq:33} $y_3$ was added and subtracted, and in \eqref{eq:34} we use the fact that $y_2$ and $y_3$ are integers (thus $y_3 - y_2 \geq 1$).
This provides a lower bound for the distance between the first point in a segment and the first point in the following segment:

\begin{align}
y_3 - y_1 &> \frac{err + 1}{1 - \frac{x_2 - x_1}{x_3 - x_1}} = \left(err + 1\right)\cdot\frac{x_3 - x_1}{x_3 - x_2} > err + 1\\
&\Rightarrow y_3 - y_1 > err + 1
\end{align}

Since $(x_3,y_3)$ is the first point outside of the segment, the number of locations in the segment is $y_3-1-y_1\geq err + 1$.
\end{proof}

\subsection{\textsc{ShrinkingCone} Competitive Analysis}

\label{prf:ca}

In the following, we prove that \textsc{ShrinkingCone} can be arbitrarily worse than the optimal solution when considering the number of segments produced (i.e., \textsc{ShrinkingCone} is not competitive).

\begin{proof}
Given the error threshold $E = 100$, consider the following input to \textsc{ShrinkingCone}:
\begin{enumerate}
    \item 
    3 keys $(x_1,y_1),(x_2,y_2),(x_3,y_3)$ where $y_1=1,y_2=2,y_3=3$ and $x_3-x_2=x_2-x_1 = \frac{E}{2}$ (this is step 1 in \autoref{fig:comp_analysis}).
    \item
    The key $x_4 = x_3 + \frac{1}{E}$ repeated $E+1$ times (using $E+1$ consecutive locations), and the key $x_5 = x_4 + \frac{1}{E}$ without repetitions (using 1 location).
    
    After that repeat for $i\in [1,N]$ the following pattern: the key $x_{2(i + 2)} = x_{2(i + 2) - 1} + E$ repeated $E + 1$ times, and a single appearance of the key $x_{2(i + 2) + 1} = x_{2(i + 2)} + \frac{1}{E}$ (this is step 2 in \autoref{fig:comp_analysis}).
    \item
    The key $x_{2(N + 1 + 2)} = x_{2(N + 1 + 2) - 1} + \frac{E}{2}$ (step 3 in \autoref{fig:comp_analysis}).
\end{enumerate}

The algorithm will then create the following segments (an illustration is shown in \autoref{fig:comp_analysis}): 
\begin{itemize}
    \item
    $[x_1, x_4]$ (with slope $\frac{3}{E + \frac{1}{E}}$): adding the key $x_5$ will result in the slope $\frac{3+E+1}{E + \frac{2}{E}}$ which will not satisfy the error requirement for $x_4$, $y_1 + \frac{3+E+1}{E + \frac{2}{E}}\cdot (x_4-x_1) - y_4 = 1 + \frac{3+E+1}{E + \frac{2}{E}}\cdot (E + \frac{1}{E}) - 4 = 100.98> E$.
    \item
    Each of the next segments will contain exactly two keys (where the first key appears once, and the second key appears $E+1$ times), since otherwise the error for the second key will be $\frac{1 + E + 1}{E + \frac{1}{E}}\cdot E - 1 = 100.98 > E$.
    Just like before, the $E+1$ repetitions of a single key will cause a violation of the error (due to the spacing between subsequent keys).
\end{itemize}
Therefore, the algorithm will create $N+2$ segments given this input.

On the other hand, the optimal algorithm will need only $2$ segments: the first segment is the first key, and the second segment covers the rest of the input since the line starting at the second key and ending at the last key is never further away than $E$ from any key, due to the construction of the input. 
The slope of the second segment will be $\frac{3 + (N+1)\cdot(E + 2)}{E + (N+1)\cdot (E + \frac{1}{E})}$, and the first key in the segment is about $\frac{E}{2}$ away on the $x$ axis from the first repeated key.
Since the repeated keys are spaced evenly (distance on the $x$ axis of $E + \frac{1}{E}$), the linear function will not violate the error threshold for any key.

Since $N$ can be arbitrarily large, the algorithm is not competitive.
\end{proof}

\begin{figure}
\begin{center}
\includegraphics[width=0.65\columnwidth]{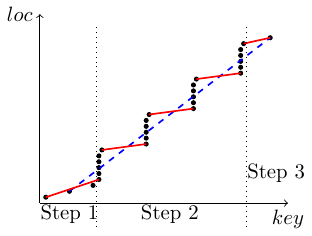}
\vspace{-2.5ex}
\caption{Competitive analysis sketch: the dots are the input, the dashed line is the optimal segmentation (the first dot is a segment), and the solid lines are the segments created by \textsc{ShrinkingCone}.}
\label{fig:comp_analysis}
\end{center}
\end{figure}

\section{Further Evaluation}
\label{sec:appendix:further_eval}
In this section, we present additional experimental results that provide a more in-depth study of \system{} and other approaches.
More specifically, we compare \system{} to Correlation Maps~\cite{correlation_maps}, show how \system{} performs for queries that involve range predicates, show how the buffer size impacts insert throughput, and finally breakdown the lookup latency.

\begin{figure}
  \begin{minipage}[b]{.47\columnwidth}
    \centering
    \includegraphics[width=1.08\linewidth]{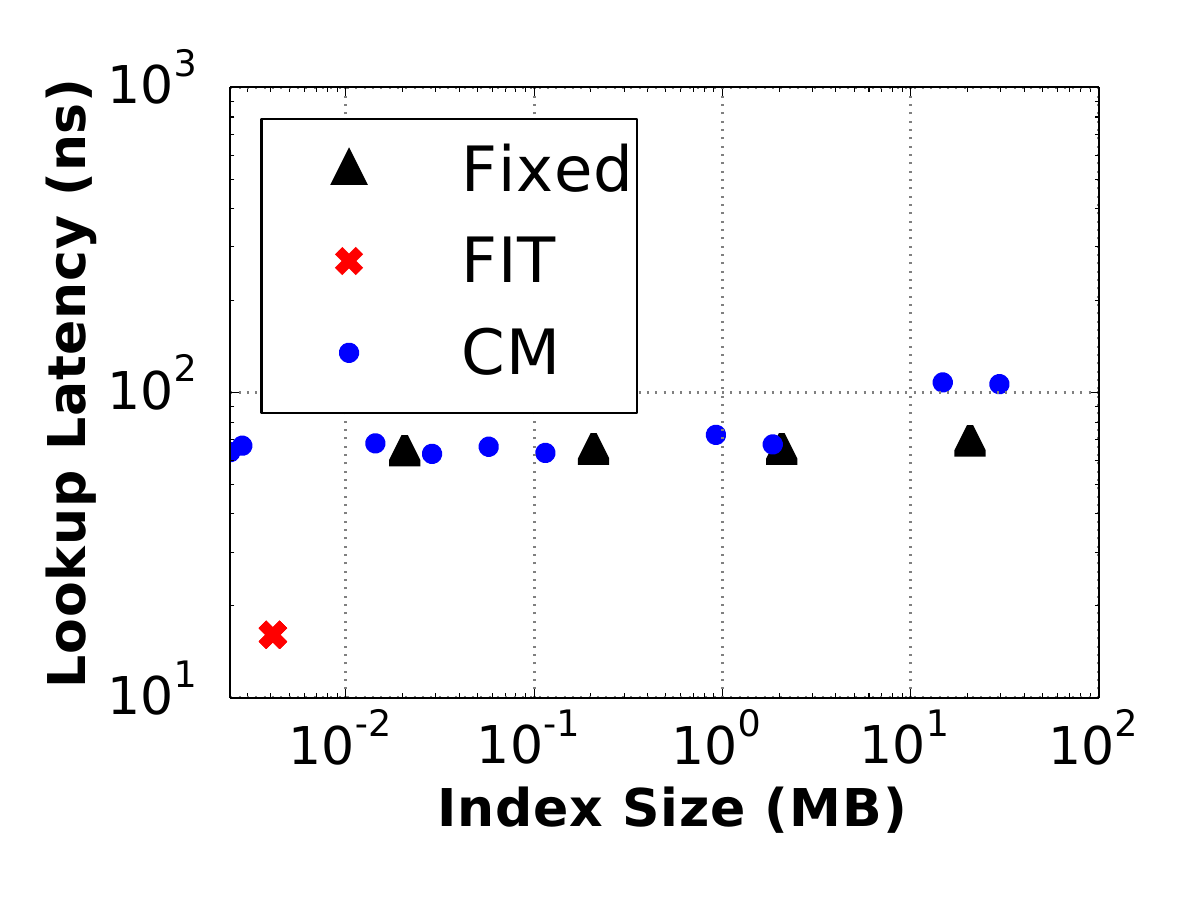}
    \subcaption{Linear Distribution}
    \label{fig:linear_distribution}
  \end{minipage}
  \hfill
  \begin{minipage}[b]{.52\columnwidth}
    \centering
    \includegraphics[width=0.97\linewidth]{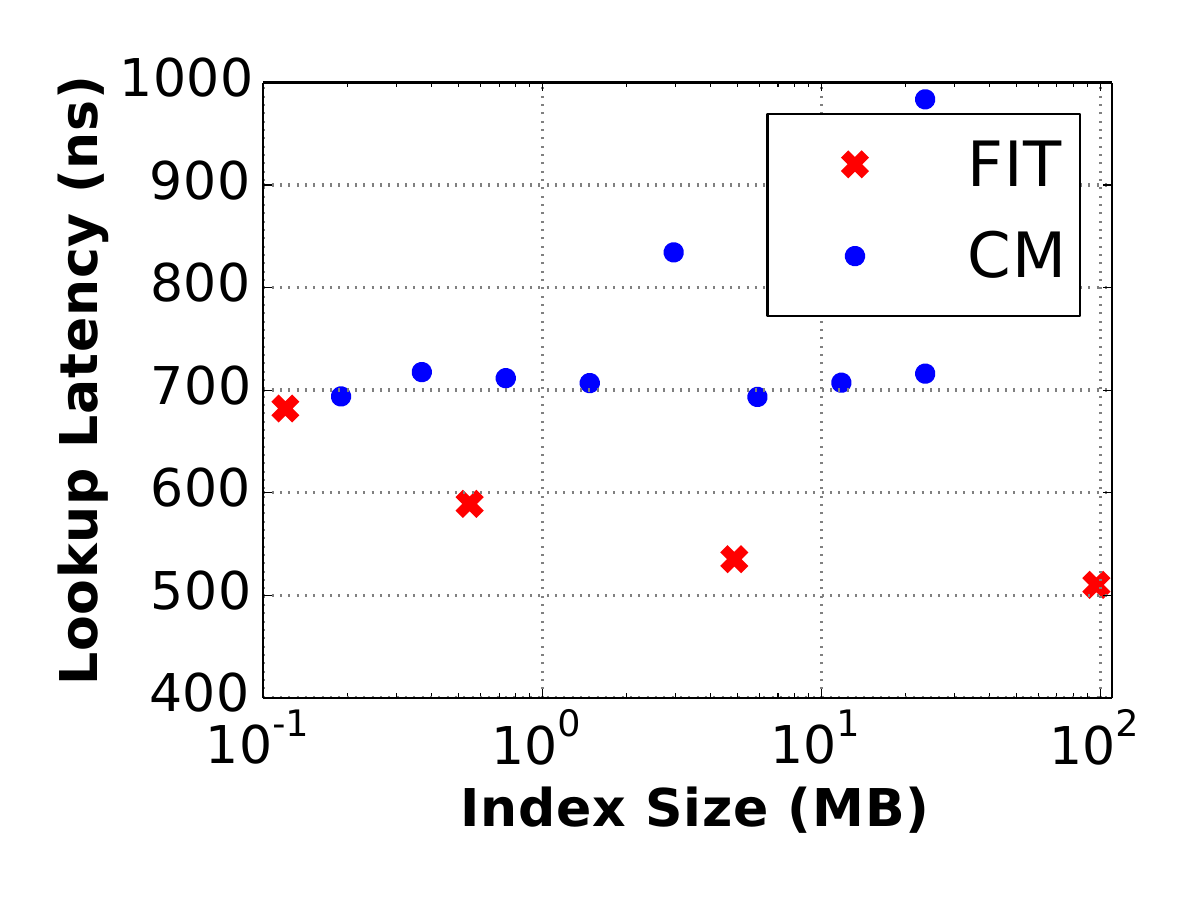}
    \subcaption{Weblogs}
    \label{fig:weblogs_cms}
  \end{minipage}
  \hfill
  \vspace{-5mm}
  \caption{Correlation Maps}
  \vspace{-4mm}
  \label{fig:cms}
\end{figure}

\subsection{Correlation Maps}
\label{sec:experiments:cms}
In the following, we compare \system{} to Correlation Maps (CMs) which are designed to exploit correlations that exist between attributes.
To ensure a fair comparison and to adapt the described techniques to build a clustered primary index, we assume that each tuple has an implicit position (e.g., ROWID). 
In addition to the design in the original paper \cite{correlation_maps} (e.g., bucketing both along the clustered and unclustered dimension), we implemented  additional optimizations not described in the paper since our use case has the additional knowledge that the unclustered attribute (i.e., timestamp) is sorted with respect to the clustered attribute (i.e., ROWID). 
For example, instead of storing several buckets for a given unclustered range (e.g., \{100-200\} $\rightarrow$ [b0,b1,b2,b3]) our implementation stores only the first and last bucket (e.g., \{100-200\} $\rightarrow$ [b0,b3]).
Additionally, when looking up a key, our implementation uses binary search within the entire region instead of searching each bucket individually.  We found that these two optimizations improved lookup performance/reduced the size of a CM.

First, to show that CMs are no more efficient for primary indexes than an index that uses fixed-size pages, we consider the simple case of indexing all integer values from 1 to 100M.
Although simple, this is not unusual since users often index monotonically increasing identifiers (e.g., customer ID).
Figure~\ref{fig:linear_distribution} shows the lookup latency for various index sizes (x-axis) for CMs, \system{}, and fixed-size paging.
As previously described, we vary the size of \system{} by selecting various error thresholds and use different page sizes/bucket sizes for CMs and B+ trees that use fixed-size paging.
As shown, CMs perform similar to B+ trees with fixed-size pages, since they use fixed-size buckets to partition the attribute domain.
\system{}, on the other hand, can use a single segment to represent this data and can locate to the exact position of any element using almost no space.

Next, we also used CMs also to build a primary key index on the Weblogs data set and compare it to our \system{}.
Figure~\ref{fig:weblogs_cms} shows index size (x-axis) vs. the lookup latency (y-axis) for both CMs and \system{} using 400M timestamps from the Weblogs dataset. 
Since \system{} creates variable-sized segments that better model the underlying data distribution (instead of the fixed-size bucketing approach that CMs use), \system{} is able to provide faster lookup performance using a smaller memory footprint.

\subsection{Range Queries}
\label{sec:experiments:range}
In addition to point queries, \system{} also supports range queries whereby an arbitrary number of tuples must be examined to compute the result.
Figure~\ref{fig:weblogs_range} shows the performance of a \system{} for range queries for both a sum and count aggregate for various selectivities using the Weblogs dataset.

Interestingly, to compute the result for a count query, a \system{} can subtract the start position from the end position of the range (i.e., a count aggregate over a range is essentially two point lookups), resulting in a constant lookup latency.
On the other hand, computing the sum of an attribute over a range requires examining every tuple in the range, resulting in significantly more work for larger ranges.

\begin{figure}
    \centering
    \includegraphics[width=.58\linewidth]{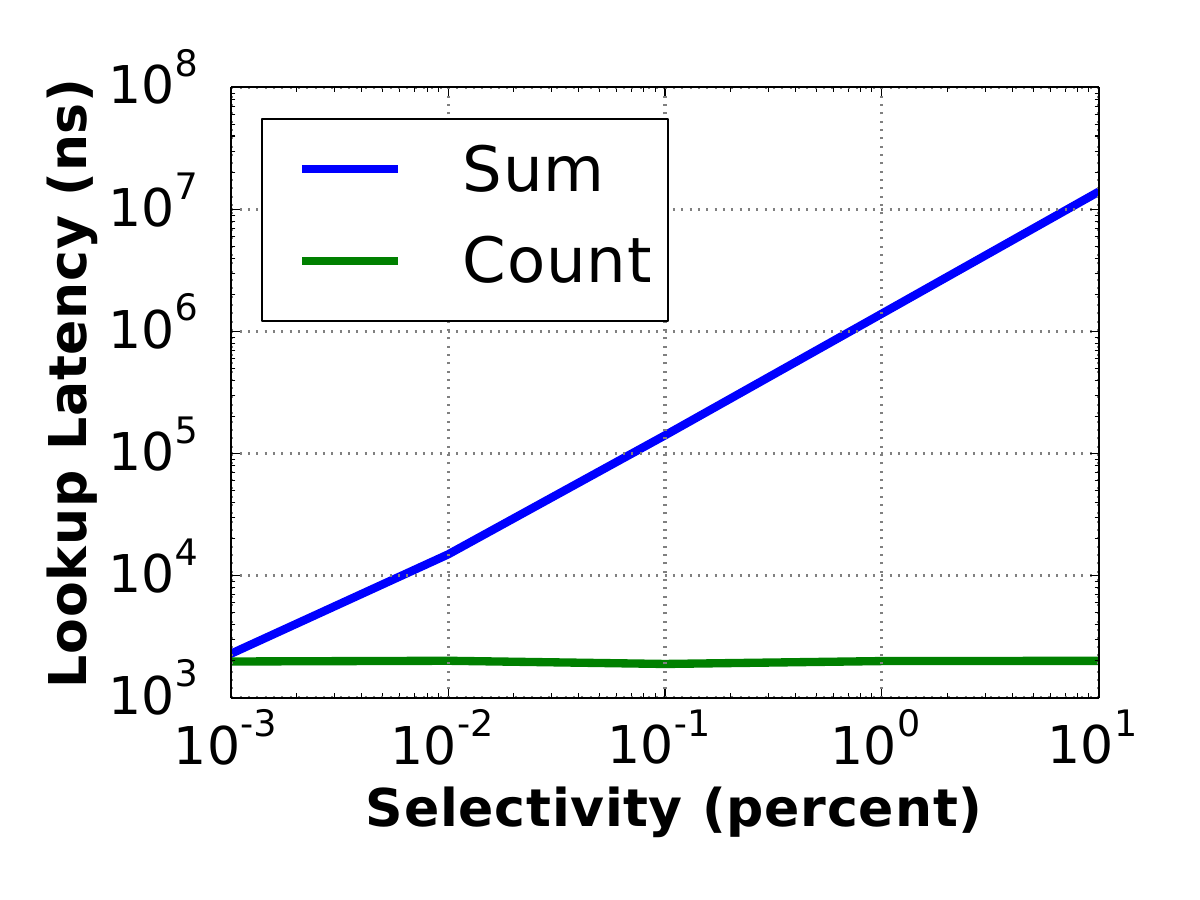}
    \caption{Range Queries}
  \vspace{-4mm}

    \label{fig:weblogs_range}
\end{figure}

\subsection{Varying Fill Factor}

As previously mentioned, the buffer size of a segment determines the amount of space that a segment reserves to hold new data items.
Once the segment's insert buffer reaches this threshold, the data from the segment and the segment's buffer are merged, and \system{} executes the previously described segmentation algorithm to generate new segments that satisfy the specified error threshold.

Therefore, in Figure~\ref{fig:fill_factor}, we vary the buffer size and measure the total throughput using the Weblogs dataset with an error threshold of $e=20,000$.
As shown, the size of the buffer can dramatically impact the write throughput of a \system{}.
More specifically, larger buffers result in fewer splitting operations, improving performance.
However, a buffer that is too large will result in longer lookup latencies (modeled in the cost model in Section~\ref{sec:cost_model}).

Therefore, the fill factor of an \system{} can be effectively used by a DBA to tune a \system{} to be more read or write optimized, depending on the workload.

\begin{figure}
\begin{center}
\includegraphics[width=0.65\columnwidth]{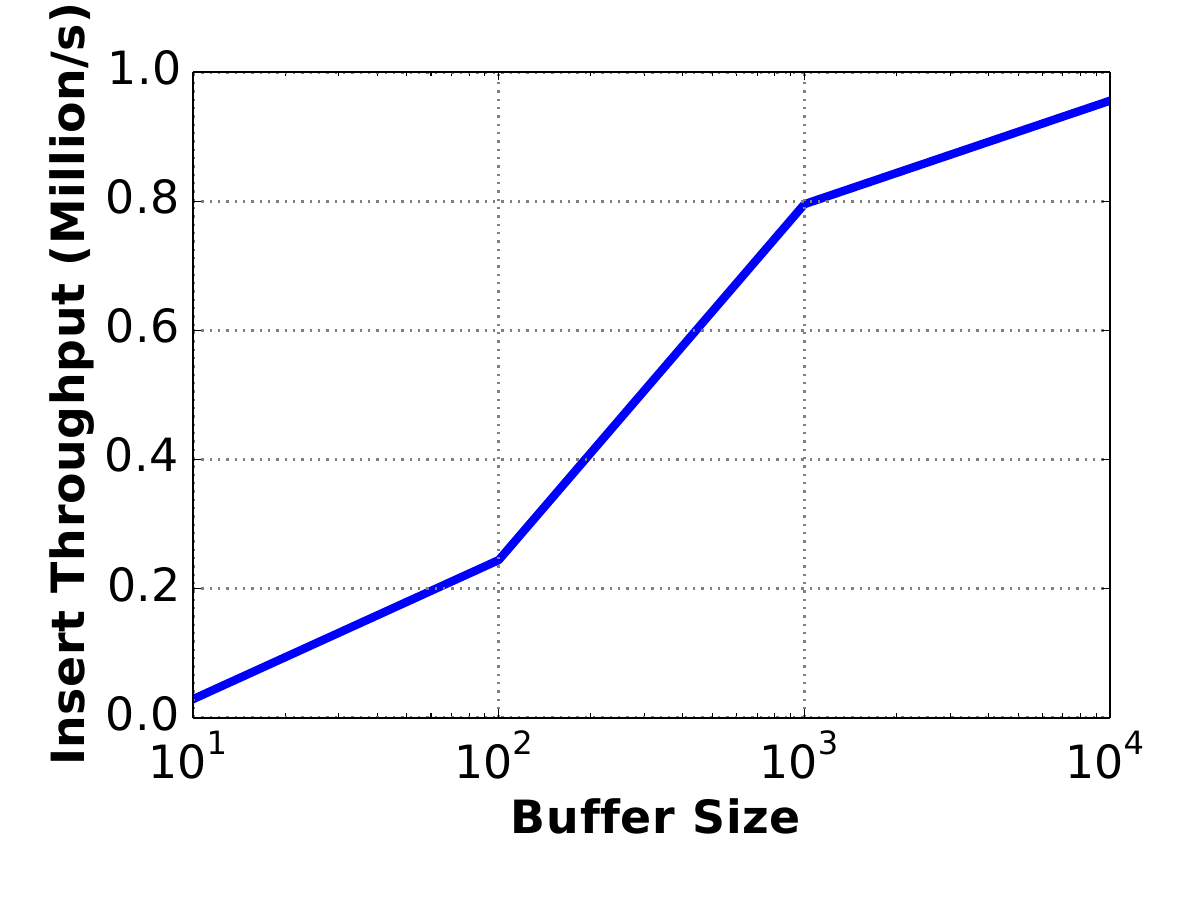}
\vspace{-2.5ex}
\caption{Insert Throughput / Varying Buffer Size}
\label{fig:fill_factor}
\end{center}
\end{figure}

\subsection{Lookup Breakdown}
As described in Section~\ref{sec:lookups}, a lookup involves two steps (i.e., locating the segment where a key belongs and then searching the segment's data in order to find the item within the segment).
Therefore, we examine the amount of time spent in each of these two steps for \system{} as well as an index that uses fixed-size paging for various error thresholds.

The results in Figure~\ref{fig:lookup_breakdown} show that in both cases the majority of time is spent searching the tree to find the page where the data item belongs for smaller error thresholds (and page sizes).
Since \system{} is able to leverage properties of the underlying data distribution in order to create variable sized segments, the resulting tree is significantly smaller.
Therefore, \system{} spends less time searching the tree to find the corresponding segment for a given key.

\begin{figure}
  \begin{minipage}[b]{.49\textwidth}
    \centering
    \includegraphics[width=0.8\linewidth]{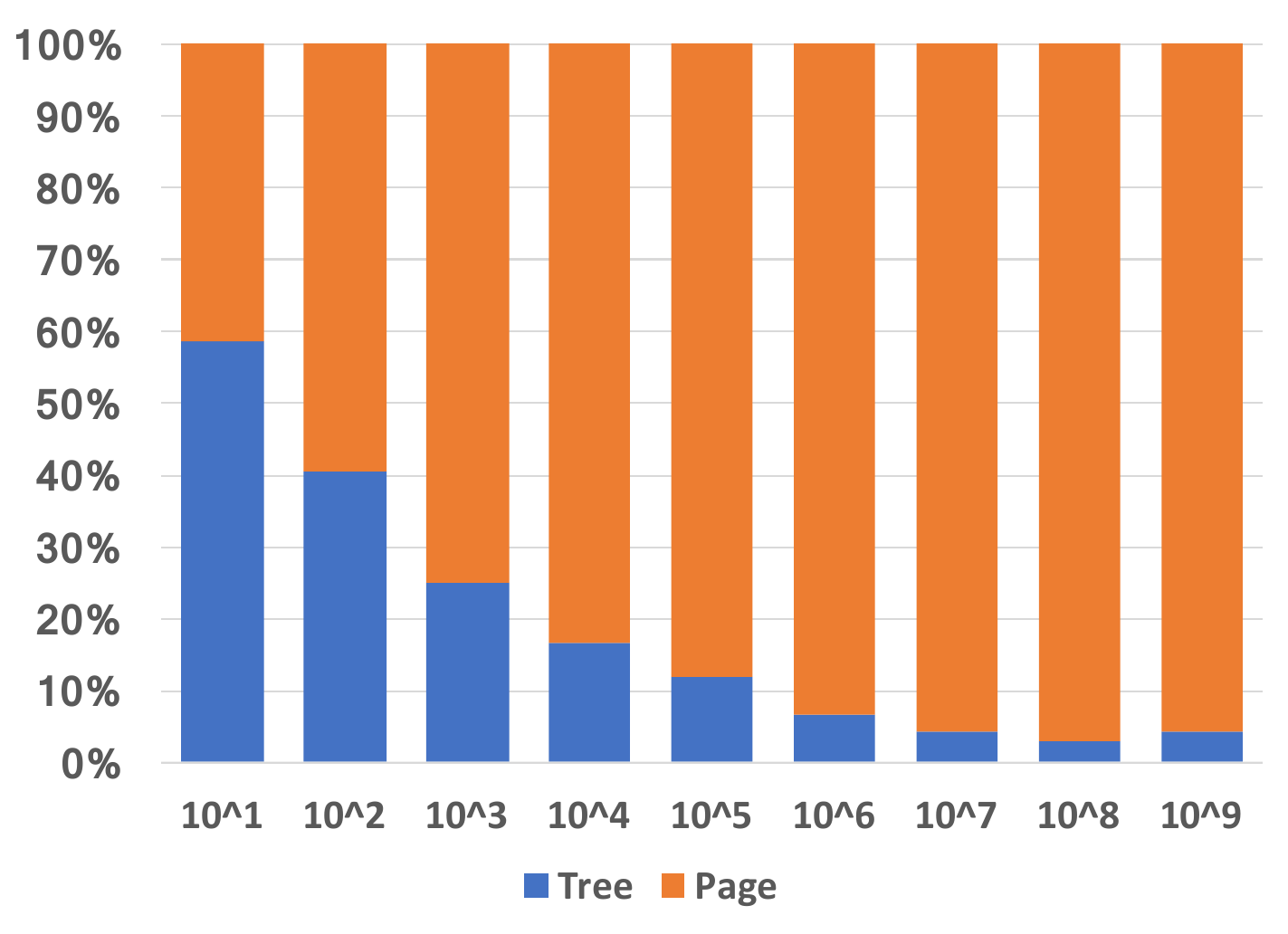}
    \subcaption{\system{}}
    \label{fig:atree_breakdown}
  \end{minipage}
  \hfill
  \begin{minipage}[b]{.49\textwidth}
    \centering
    \includegraphics[width=0.8\linewidth]{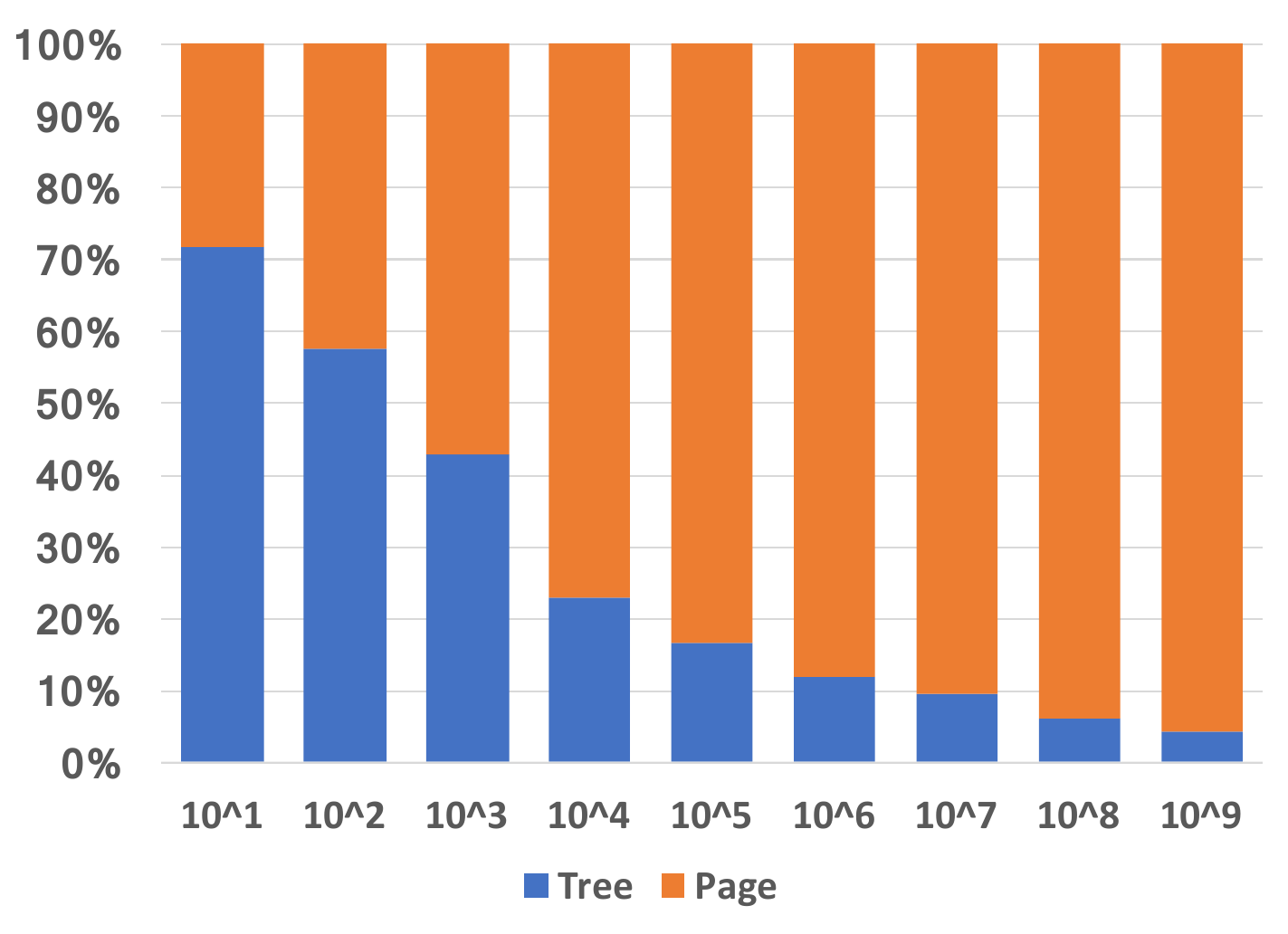}
    \subcaption{Fixed-size Index}
    \label{fig:fixed_breakdown}
  \end{minipage}
  \hfill
  \vspace{-1.5ex}
  \caption{Lookup Breakdown}
  \vspace{-2.5ex}
  \label{fig:lookup_breakdown}
\end{figure}
\end{appendices}

\end{document}